\algnewcommand\algorithmicinput{\textbf{Input:}}
\algnewcommand\Input{\item[\algorithmicinput]}
\algnewcommand\algorithmicoutput{\textbf{Output:}}
\algnewcommand\Output{\item[\algorithmicoutput]}
\newtheorem{theorem}{Theorem}
\newtheorem{lemma}{Lemma}
\newtheorem{corollary}{Corollary}
\newtheorem{definition}{Definition}
\newtheorem{claim}{Claim}
\newcommand{\eps}{\epsilon}
\newcommand{\diag}{\mathtt{diag}}
\newcommand{\ai}[1]{\|a_{., #1}\|}
\newcommand{\ao}[1]{\|a_{#1,.}\|}
\newcommand{\ali}[1]{\|a_{., #1}\|_1}
\newcommand{\alo}[1]{\|a_{#1, .}\|_1}
\newcommand{\aij}{a_{ij}}
\newcommand{\sij}{\sum_{i,j}}
\newcommand{\x}{\mathbf{x}}
\newcommand{\xt}[1]{\mathbf{x^{(#1)}}}
\newcommand{\hxt}[1]{\widehat{\mathbf{x}}^{(#1)}}
\newcommand{\hx}{\hat{x}}
\newcommand{\hai}{\|\widehat{a}^{(t)}_{.,i}\|}
\newcommand{\khai}{\|\widehat{a}^{(t)}_{.,k}\|}
\newcommand{\hao}{\|\widehat{a}^{(t)}_{i,.}\|}
\newcommand{\khao}{\|\widehat{a}^{(t)}_{k,.}\|}
\newcommand{\haij}{\widehat{a}^{(t)}_{ij}}
\newcommand{\haji}{\widehat{a}^{(t)}_{ji}}
\newcommand{\dd}{\mathbf{d}}
\DeclareMathOperator*{\argmax}{arg\,max}
\newcommand{\RR}{{\mathbb{R}}}
\title{Matrix Balancing in $L_p$ Norms:\\
         A New Analysis of Osborne's Iteration}
\author{Rafail Ostrovsky\\UCLA\\rafail@cs.ucla.edu\thanks{Research supported in part by NSF grants 1065276, 1118126 and 1136174, US-Israel BSF grants, OKAWA Foundation Research Award, IBM Faculty Research Award, Xerox Faculty Research Award, B. John Garrick Foundation
Award, Teradata Research Award, and Lockheed-Martin Corporation Research Award.
This material is also based upon work supported in part by DARPA
Safeware program. The views expressed are those of the authors and do not reflect the official policy or position of the Department of Defense or the U.S. Government.} \and
Yuval Rabani\\The Hebrew University of Jerusalem\\yrabani@cs.huji.ac.il\thanks{
Research supported in part by ISF grant 956-15, by BSF grant 
2012333, and by I-CORE Algo.} \and
Arman Yousefi\\UCLA\\armany@cs.ucla.edu\footnotemark[1]
}
\begin{document}

\maketitle
\begin{abstract}
We study an iterative matrix conditioning algorithm due 
to Osborne (1960). The goal of the algorithm is to convert 
a square matrix into a {\em balanced} matrix where every 
row and corresponding column have the same norm. The 
original algorithm was proposed for balancing rows and 
columns in the $L_2$ norm, and it works by iterating over
balancing a row-column pair in fixed round-robin order.
Variants of the algorithm for other norms have been heavily 
studied and are implemented as standard preconditioners 
in many numerical linear algebra packages. Recently, Schulman
and Sinclair (2015), in a first result of its kind for any norm,
analyzed the rate of convergence of a variant of Osborne's 
algorithm that uses the $L_{\infty}$ norm and a different
order of choosing row-column pairs. In this paper we study 
matrix balancing in the $L_1$ norm and other $L_p$ norms. 
We show the following results for any matrix $A = (a_{ij})_{i,j=1}^n$, 
resolving in particular a main open problem mentioned by Schulman 
and Sinclair.
\begin{enumerate}
\item We analyze the iteration for the $L_1$ norm under a 
greedy order of balancing. We show that it converges to
an $\eps$-balanced matrix in
$K = O(\min\{\eps^{-2}\log w,\eps^{-1}n^{3/2}\log(w/\eps)\})$ 
iterations that cost a total of $O(m + Kn\log n)$ arithmetic 
operations over $O(n\log w)$-bit numbers.
Here $m$ is the number of non-zero entries of $A$,
and $w = \sum_{i,j} |a_{ij}|/a_{\min}$ with 
$a_{\min} = \min\{|a_{ij}|:\ a_{ij}\neq 0\}$.
\item We show that the original round-robin implementation
converges to an $\eps$-balanced matrix in 
$O(\eps^{-2}n^2\log w)$ 
iterations totalling $O(\eps^{-2}mn\log w)$ arithmetic 
operations over $O(n\log w)$-bit numbers.
\item We show that a random implementation of the iteration
converges to an $\eps$-balanced matrix in $O(\eps^{-2}\log w)$
iterations using $O(m + \eps^{-2}n\log w)$ arithmetric
operations over $O(\log(wn/\eps))$-bit numbers.
\item We demonstrate a lower bound of $\Omega(1/\sqrt{\eps})$ on 
the convergence rate of any implementation of the iteration.
\item We observe, through a known trivial reduction, that our results 
for $L_1$ balancing apply to any $L_p$ norm for all finite $p$, at the 
cost of increasing the number of iterations by only a factor of $p$.
\end{enumerate}
We note that our techniques are very different from those used
by Schulman and Sinclair.
\end{abstract}

\thispagestyle{empty}
\newpage
\setcounter{page}{1}

\section{Introduction}

Let $A=(a_{ij})_{n\times n}$ be a square matrix with real entries, and 
let $\|\cdot\|$ be a given norm. For an index $i\in[n]$, let $\ao{i}$ and 
$\ai{i}$, respectively, denote the norms of the $i$th row and the $i$th 
column of $A$, respectively. The matrix $A$ is \emph{balanced} in 
$\|\cdot\|$ iff $\ai{i} = \ao{i}$ for all $i$. An invertible diagonal matrix 
$D=\diag(d_1,\ldots, d_n)$ is said to \emph{balance} a matrix $A$ iff 
$DAD^{-1}$ is balanced. A matrix $A$ is \emph{balanceable} in 
$\|\cdot\|$ iff there exists a diagonal matrix $D$ that balances it. 

Osborne~\cite{osborne} studied the above problem in the $L_2$ norm 
and considered its application in preconditioning a given matrix in order 
to increase the accuracy of the computation of its eigenvalues. The 
motivation is that standard linear algebra algorithms that are used to 
compute eigenvalues are numerically unstable for unbalanced matrices; 
diagonal balancing addresses this issue by obtaining a balanced matrix 
that has the same eigenvalues as the original matrix, as $DAD^{-1}$ 
and $A$ have the same eigenvalues. Osborne suggested an iterative 
algorithm for finding a diagonal matrix $D$ that balances a matrix $A$, 
and also proved that his algorithm converges in the limit. He also observed 
that if a diagonal matrix $D=\diag(d_1,\ldots,d_n)$ balances a matrix 
$A$, then the diagonal vector $\mathbf{d} =(d_1, \ldots, d_n)$ minimizes 
the Frobenius norm of the matrix $DAD^{-1}$. Osborne's classic algorithm 
is an iteration that at each step balances a row and its corresponding 
column by scaling them appropriately. More specifically the algorithm 
balances row-column pairs in a fixed cyclic order. In order to balance 
row and column $i$, the algorithm scales the $i$th row by $\sqrt{\ai{i}/\ao{i}}$ 
and the $i$th column by $\sqrt{\ao{i}/\ai{i}}$. Osborne's algorithm converges 
to a unique balanced matrix, but there have been no upper bounds on the 
converges rate of Osborne's algorithm for the $L_2$ norm prior to our work.

Parlett and Reinsch~\cite{parlett} generalized Osborne's algorithm to 
other norms without proving convergence. The $L_1$ version of the 
algorithm has been studied extensively. The convergence in the limit 
of the $L_1$ version was proved by Grad~\cite{grad}, uniqueness of 
the balanced matrix by Hartfiel~\cite{hartfiel}, and a characterization 
of balanceable matrices was given by Eaves et al.~\cite{eaves}. Again, 
there have been no upper bounds on the running time of the $L_1$ 
version of the iteration. The first polynomial time algorithm for balancing 
a matrix in the $L_1$ norm was given by Kalantari, Khachiyan, and Shokoufandeh~\cite{khachiyan}. Their approach is different from the 
iterative algorithm of Osborne-Parlett-Reinsch. They reduce the balancing 
problem to a convex optimization problem and then solve that problem 
approximately using the ellipsoid algorithm. Their algorithm runs in 
$O(n^4 \log(n\log w/\eps))$ arithemtic operations where 
$w =\sum_{i,j}|a_{i,j}|/a_{\min}$ for $a_{\min}=\min\{|a_{ij}|: a_{ij}\neq 0\}$
and $\eps$ is the relative imbalance of the output matrix 
(see Definition~\ref{def:approx}).

For matrix balancing in the $L_{\infty}$ norm, 
Schneider and Schneider~\cite{schneider} gave an $O(n^4)$-time 
non-iterative algorithm. This running time was improved to 
$O(mn + n^2\log n)$ by Young, Tarjan, and Orlin~\cite{young91fasterparametric}. 
Despite the existence of polynomial time algorithms for balancing 
in the $L_1$ and $L_{\infty}$ norms, and the lack of any theoretical 
bounds on the running time of the Osborne-Parlett-Reinsch (OPR) iterative 
algorithm, the latter is favored in practice, and the Parlett and Reinsch
variant~\cite{parlett} is implemented as a standard in almost all linear 
algebra packages (see Chen~\cite[Section 3.1]{chenThesis}, also the
book~\cite[Chapter 11]{Numerical} and the
code in~\cite{EISPACK}). One reason is 
that iterative methods usually perform well in practice and run for far 
fewer iterations than are needed in the worst case. Another advantage 
of iterative algorithms is that they are simple, they provide steady partial 
progress, and they can always generate a matrix that is sufficiently balanced 
for the subsequent linear algebra computation. 

Motivated by the impact of the OPR algorithm 
and the lack of any theoretical bounds on its running time, Schulman and Sinclair~\cite{schulman} recently showed the first bound on the convergence 
rate of a modified version of this algorithm in the $L_{\infty}$ norm. They prove 
that their modified algorithm converges in $O(n^3\log (\rho n/\eps))$ balancing 
steps where $\rho$ measures the initial imbalance of $A$ and $\eps$ is the 
target imbalance of the output matrix. Their algorithm differs from the original 
algorithm only in the order of choosing row-column pairs to balance (we will
use the term {\em variant} to indicate a deviation from the original round-robin
order). Schulman and Sinclair 
do not prove any bounds on the running time of the algorithm for other $L_p$ 
norms; this was explicitly mentioned as an open problem. Notice that when
changing the norm, not only the target balancing condition changes but also
the iteration itself, so we cannot deduce an upper bound on the rate of
convergence in the $L_p$ norm from the rate of convergence in the $L_{\infty}$
norm.

In this paper we resolve the open question of~\cite{schulman}, and upper bound 
the convergence rate of the OPR iteration in any $L_p$ norm.\footnote{It should
be noted that the definition of target imbalance $\eps$ in~\cite{schulman} 
is stricter than the definition used by~\cite{khachiyan}. We use the definition
in~\cite{khachiyan}. This is justified by the fact that the numerical stability
of eigenvalue calculations depends on the Frobenius norm of the balanced
matrix, see~\cite{parlett}.}
Specifically, we show the following bounds for the $L_1$ norm. They imply
the same bounds with an extra factor of $p$ for the $L_p$ norm, by using
them on the matrix with entries raised to the power of $p$.
(Below, the $\tilde{O}(\cdot)$ notation hides factors that are logarithmic in
various parameters of the problem. Exact bounds await the statements
of the theorems in the following sections.)
We show that the original algorithm (with no modification) converges to an 
$\eps$-balanced matrix in $\tilde{O}(n^2/{\eps^2})$ balancing steps, using 
$\tilde{O}(mn/\eps^2)$ arithmetic operations. We also show that a greedy 
variant converges in $\tilde{O}(1/\eps^2)$ balancing steps, using 
$O(m) + \tilde{O}(n/\eps^2)$ arithmetic operations; or alternatively
in $\tilde{O}(n^{3/2}/\eps)$ iterations, using $\tilde{O}(n^{5/2}/\eps)$
arithmetic operations. Thus, the number of arithmetic operations needed 
by our greedy variant is nearly linear in $m$ or nearly linear in $1/\eps$. 
The near linear dependence on $m$ is significantly better than the 
Kalantari-Khachiyan-Shokoufandeh algorithm that uses 
$O(n^4 \log(n\log w/\eps))$ arithmetic operations (and also the
Schulman and Sinclair version with a stricter, yet $L_{\infty}$,
guarantee). For an accurate comparison
we should note that we may need to maintain $\tilde{O}(n)$ bits of precision, 
so the running time is actually $O(m + n^2\log n \log w/\eps^2)$ (the
Kalantari et al. algorithm maintains $O(\log(wn/\eps))$-bit numbers).
We improve this with yet another, randomized, variant that has similar convergence
rate (nearly linear in $m$), but needs only $O(\log(wn/\eps))$ bits of 
precision. Finally, we show that the dependence on $\eps$ given by our 
analyses is within the right ballpark---we demonstrate a lower bound of
$\Omega(1/\sqrt{\eps})$ on the convergence rate of any variant of the
algorithm to an $\eps$-balanced matrix. Notice the contrast 
with the Schulman-Sinclair upper bound for balancing in the $L_{\infty}$ 
norm that has $O(\log (1/\eps))$ dependence on $\eps$
(this lower bound is for the Kalantari et al. notion of 
balancing so it naturally applies also to strict balancing).


Osborne observed that a diagonal matrix $D=\diag(d_1,\ldots,d_n)$ 
that balances a matrix $A$ in the $L_2$ norm also minimizes 
the Frobenius norm of the matrix $DAD^{-1}$. Thus, the 
balancing problem can be reduced to minimizing a convex 
function. Kalantari et al.~\cite{khachiyan} gave a convex
program for balancing in the $L_1$ norm. Our analysis
is based on their convex program. We relate the OPR 
balancing step to the coordinate descent method in convex 
programming. We show that each step reduces the value of 
the objective function. Our various bounds are derived through 
analyzing the progress made in each step. In particular, one
of the main tools in our analysis is an upper bound on the
distance to optimality (measured by the convex objective
function) in terms of the the $L_1$ norm of the gradient,
which we prove using network flow arguments.

For lack of space, many proofs are missing inline.
They appear in Section~\ref{proofs}.
 


\section{Preliminaries}\label{sec:preliminaries}

In this section we introduce notation and definitions, we discuss 
some previously known facts and results, and we prove a couple 
of useful lemmas.

\paragraph{The problem.}
Let $A=(a_{ij})_{n\times n}$ be a square real matrix, and let $\|\cdot\|$ be 
a norm on $\RR^n$. For an index $i\in[n]$, let $\ao{i}$ and $\ai{i}$, respectively, 
denote the norms of the $i$th row and the $i$th column of $A$, respectively.
A matrix $A$ is \emph{balanced} in $\|\cdot\|$ iff $\ai{i} = \ao{i}$ for all $i$. 
An invertible diagonal matrix $D=\diag(d_1,\ldots, d_n)$ is said to \emph{balance} 
a matrix $A$ iff $DAD^{-1}$ is balanced. A matrix $A$ is \emph{balanceable} 
in $\|\cdot\|$ iff there exists a diagonal matrix $D$ that balances it. 

For balancing a matrix $A$ in the $L_p$ norm only the absolute values of the 
entries of $A$ matter, so we may assume without loss of generality that $A$ 
is non-negative.  Furthermore, balancing a matrix does not change its diagonal 
entries, so if a diagonal matrix $D$ balances $A$ with its diagonal entries replaced 
by zeroes, then $D$ balances $A$ too.  Thus, for the rest of the paper, we assume 
without loss of generality that the given $n\times n$ matrix $A=(a_{ij})$ is non-negative
and its diagonal entries are all $0$.

A diagonal matrix $D=\diag(d_1,\ldots,d_n)$ balances $A=(a_{ij})$ in the $L_p$ 
norm if and only if $D^p=\diag({d_1}^p,\ldots,{d_n}^p)$ balances the matrix 
$A'=({a_{ij}}^p)$ in the $L_1$ norm. Thus, the problem of balancing matrices in 
the $L_p$ norm (for any finitie $p$) reduces to the problem of balancing matrices 
in the $L_1$ norm; for the rest of the paper we focus on balancing matrices in the 
$L_1$ norm. 

For an $n\times n$ matrix $A$, we use $G_A = (V, E, w)$ to denote the weighted 
directed graph whose adjacency matrix is $A$. More formally, $G_A$ is defined
as follows. 
Put $V=\{1,\ldots, n\}$, put $E=\{(i,j): a_{ij} \neq 0\}$, and put $w(i,j) = a_{ij}$ for 
every $(i,j)\in E$. We use an index $i\in[n]$ to refer to both the $i$th row or column 
of $A$, and to the node $i$ of the digraph $G_A$. Thus, the non-zero entries of 
the $i$th column (the $i$th row, respectively) correspond to the arcs into (out of, 
respectively) node $i$. In the $L_1$ norm it is useful to think of the weight of an 
arc as a flow being carried by that arc. Thus, $\ali{i}$ is the total flow into vertex 
$i$ and $\alo{i}$ is the total flow out of it. Note that if a matrix $A$ is not balanced 
then for some nodes $i$, $\ali{i}\neq\alo{i}$, and thus the flow on the arcs does 
not constitute a valid circulation
because flow conservation is not maintained. Thus, the goal of balancing in the
$L_1$ norm can be stated as applying diagonal scaling to find a flow function on
the arcs of the graph $G_A$ that forms a valid circulation. We use both views
of the graph (with arc weights or flow), and also the matrix terminology, 
throughout this paper, as convenient.

Without loss of generality we may assume that the undirected graph underlying 
$G_A$ is connected. Otherwise, after permuting $V = \{1,\ldots, n\}$, the given 
matrix $A$ can be replaced by $\diag(A_1,\ldots, A_r)$ where each of $A_1,\ldots, A_r$ 
is a square matrix whose corresponding directed graph is connected. Thus, balancing 
$A$ is equivalent to balancing each of $A_1,\ldots, A_r$.

The goal of the iterative algorithm is to balance approximately a matrix $A$, up 
to an error term $\eps$. We define the error here.
\begin{definition}[approximate balancing]\label{def:approx}
Let $\eps > 0$.
\begin{enumerate}
\item A matrix $A$ is {\em $\eps$-balanced} iff\/
$\frac{\sqrt{\sum_{i=1}^n (\ali{i} - \alo{i})^2}}{\sum_{i,j} a_{i,j}} \le \eps$.

\item A diagonal matrix $D$ with positive diagonal entries is said to 
{\em $\eps$-balance} $A$ iff $DAD^{-1}$ 
is $\eps$-balanced.
\end{enumerate}
\end{definition}

\paragraph{The algorithms.}
Kalantari et al.~\cite{khachiyan} 
introduced the above definition of $\eps$-balancing, and showed that their 
algorithm for $\eps$-balancing a matrix in the $L_1$ norm uses 
$O(n^4 \ln((n/\eps)\ln w))$ arithmetic operations.
In their recent work, Schulman and Sinclair~\cite{schulman} use,
in the context of balancing in the $L_{\infty}$ norm, a
stronger notion of strict balancing (that requires even very low 
weight row-column pairs to be nearly balanced). Their iterative algorithm
strictly $\eps$-balances a matrix in the $L_{\infty}$ norm in 
$O(n^3\log (n\rho/\eps))$ iterations where $\rho$ measures the inital imbalance 
of the matrix. In this paper, we prove upper bounds on the convergence rate of 
the Osborne-Parlett-Reinsch (OPR) balancing. 

The OPR iterative algorithm balances indices in a fixed 
round-robin order. Schulman and Sinclair considered a variant that uses
a different rule to choose the next index to balance. We consider in this 
paper several alternative implementations of OPR balancing 
(including the original round-robin implementation) that differ only in the 
rule by which an index to balance is chosen at each step.
For all rules that we consider, the iteration generates 
a sequence $A = A^{(1)}, A^{(2)}, \ldots , A^{(t)}, \ldots$ of $n\times n$ of
matrices that converges to a unique balanced matrix $A^*$ (see Grad~\cite{grad} 
and Hartfiel~\cite{hartfiel}). The matrix $A^{(t+1)}$ is obtained by balancing an 
index of $A^{(t)}$. If the $i$th index of $A^{(t)}$ is chosen, we get that 
$A^{(t+1)} = D^{(t)} A^{(t)} {D^{(t)}}^{-1}$ where $D^{(t)}$ is a diagonal matrix 
with $d^{(t)}_{ii}= \sqrt{\|a_{.,i}^{(t)}\|_1/{\|a_{i,.}^{(t)}\|_1}}$ and $d^{(t)}_{jj} = 1$ 
for $j\neq i$. Note that $a_{i,.}^{(t)}$ ($a_{.,i}^{(t)}$, respectively) denotes the 
$i$th row ($i$th column, respectively) of $A^{(t)}$. Also, putting 
$\bar{D}^{(1)} = I_{n\times n}$ and $\bar{D}^{(t)} =  D^{(t-1)}\cdots D^{(1)}$ for 
$t>1$, we get that $A^{(t)} = \bar{D}^{(t)} A ({\bar{D}^{(t)}})^{-1}$.

The following lemma shows that each balancing step reduces the sum of entries 
of the matrix.
\begin{lemma}\label{lemma:reduction}
Balancing the $i$th index of a non-negative matrix $B=(b_{ij})_{n\times n}$ 
(with $b_{ii}=0$) decreases the total sum of the entries of $B$ by 
$(\sqrt{\|b_{.,i}\|_1}-\sqrt{\|b_{i,.}\|_1})^2$.
\end{lemma}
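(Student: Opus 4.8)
The plan is essentially a one-line computation, which I would organize as follows. Set $r := \|b_{i,.}\|_1 = \sum_j b_{ij}$ and $c := \|b_{.,i}\|_1 = \sum_j b_{ji}$. Recall from the description of the iteration that balancing index $i$ replaces $B$ by $DBD^{-1}$, where $D$ is the diagonal matrix with $d_{ii} = \sqrt{c/r}$ and $d_{jj} = 1$ for $j \neq i$. First I would pin down which entries actually change: left-multiplication by $D$ scales each entry of row $i$ by $d_{ii}$, right-multiplication by $D^{-1}$ scales each entry of column $i$ by $d_{ii}^{-1}$, the entry $b_{ii}$ is scaled by $d_{ii} d_{ii}^{-1} = 1$ (and equals $0$ anyway), and every entry $b_{jk}$ with $j,k \neq i$ is left untouched.

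Then I would track the affected contributions to the total sum $\sum_{j,k} b_{jk}$. Since $b_{ii} = 0$, row $i$ and column $i$ share no nonzero entry, so together they contribute $r + c$ to the sum before the step. After the step, the sum of row $i$ is $d_{ii}\, r = \sqrt{c/r}\cdot r = \sqrt{rc}$ and the sum of column $i$ is $d_{ii}^{-1}\, c = \sqrt{r/c}\cdot c = \sqrt{rc}$ (so index $i$ is indeed balanced afterwards), and together they now contribute $2\sqrt{rc}$. All remaining entries contribute identically before and after, hence the total sum decreases by exactly $(r + c) - 2\sqrt{rc} = (\sqrt{r} - \sqrt{c})^2 = \bigl(\sqrt{\|b_{.,i}\|_1} - \sqrt{\|b_{i,.}\|_1}\bigr)^2$, as claimed.

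There is no real obstacle here: the only point requiring a word of care is not double-counting the diagonal entry when summing row $i$ together with column $i$, which is precisely why the standing assumption $b_{ii} = 0$ made without loss of generality in Section~\ref{sec:preliminaries} is convenient. As a sanity check, the decrease is always nonnegative and vanishes exactly when $r = c$, i.e., when index $i$ is already balanced, matching the intuition that each balancing step makes strict progress unless there is nothing left to do; this nonnegativity is also what makes the sum of entries a valid potential for the later convergence analyses.
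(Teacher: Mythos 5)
Your proof is correct and follows essentially the same route as the paper's: both track the sums of row $i$ and column $i$ before and after the scaling (using $b_{ii}=0$ to avoid double counting), note all other entries are unchanged, and conclude the decrease is $(\sqrt{\|b_{.,i}\|_1}+\sqrt{\|b_{i,.}\|_1})^2$'s complement, namely $\|b_{.,i}\|_1+\|b_{i,.}\|_1-2\sqrt{\|b_{.,i}\|_1\|b_{i,.}\|_1}=(\sqrt{\|b_{.,i}\|_1}-\sqrt{\|b_{i,.}\|_1})^2$. The only addition is your explicit matrix-conjugation bookkeeping and the sanity check on nonnegativity, which are fine but not substantively different.
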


\begin{proof}
Before balancing, the total sum of entries in the $i$th row and in the $i$th 
column is $\|b_{i,.}\|_1 + \|b_{.,i}\|_1$. Balancing scales the entries of the 
$i$th column by $\sqrt{{\|b_{i,.}\|_1}/{\|b_{.,i}\|_1}}$ and entries of the $i$th 
row by $\sqrt{{\|b_{.,i}\|_1}/{\|b_{i,.}\|_1}}$. Thus, after balancing the sum of 
entries in the $i$th column, which equals the sum of entries in the $i$th row, 
is equal to $\sqrt{\|b_{i,.}\|_1\cdot\|b_{.,i}\|_1}$. The entries that are not in 
the balanced row and column are not changed. Therefore, keeping in mind 
that $b_{ii}=0$, balancing decreases $\sij b_{ij}$ by $\|b_{.,i}\|_1 + 
\|b_{i,.}\|_1- 2\sqrt{\|b_{i,.}\|_1\cdot\|b_{.,i}\|_1} = 
(\sqrt{\|b_{.,i}\|_1}-\sqrt{\|b_{i,.}\|_1})^2$.
\end{proof}

\paragraph{A reduction to convex optimization.}
Kalantari et al.~\cite{khachiyan}, as part of their algorithm, reduce matrix
balancing to a convex optimization problem. We overview their reduction
here. Our starting point is Osborne's observation that if a diagonal matrix 
$D=\diag(d_1,\ldots,d_n)$ balances a matrix $A$ in the $L_2$ norm, then 
the diagonal vector $\mathbf{d} =(d_1, \ldots, d_n)$ minimizes the Frobenius 
norm of the matrix $DAD^{-1}$. The analogous claim for the $L_1$ norm 
is that if a diagonal matrix $D=\diag(d_1,\ldots,d_n)$ balances a matrix $A$ 
in the $L_1$ norm, then the diagonal vector $\mathbf{d} =(d_1, \ldots, d_n)$ 
minimizes the function $F(\mathbf{d}) = \sij\aij\frac{d_i}{d_j}$. On the other 
hand, Eaves et al.~\cite{eaves} observed that a matrix $A$ can be balanced 
if and only if the digraph $G_A$ is strongly connected. The following theorem~\cite[Theorem~1]{khachiyan} summarizes the above discussion.
\begin{theorem}[Kalantari et al.]\label{thm:balancable}
Let $A=(a_{ij})_{n\times n}$ be a real non-negative matrix, $a_{ii} = 0$, for 
all $i=1,\ldots n$, such that the undirected graph underlying $G_A$ is connected. 
Then, the following statements are equivalent.
\begin{enumerate}[(i)]
\item $A$ is balanceable (i.e., there exists a diagonal matrix $D$ such that 
$DAD^{-1}$ is balanced).
\item $G_A$ is strongly connected.
\item Let $F(\mathbf{d}) = \sum_{(i,j)\in E} a_{ij}\frac{d_i}{d_j}$. There is a point
$\mathbf{d}^*\in\Omega=\{d\in\mathbb{R}^n:\ d_i >0, i = 1,\ldots,n\}$ such that
$F(\mathbf{d}^*) = \inf\{F(\mathbf{d}):\ \mathbf{d}\in\Omega\}$.
\end{enumerate}
\end{theorem}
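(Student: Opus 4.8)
The plan is to establish the cycle of implications $(iii)\Rightarrow(i)\Rightarrow(ii)\Rightarrow(iii)$. For $(iii)\Rightarrow(i)$ I would use a routine first-order calculation: since $\Omega$ is open and $F$ is differentiable on it, any minimizer $\mathbf d^*$ satisfies $\nabla F(\mathbf d^*)=0$. Collecting in $\partial F/\partial d_k$ the terms in which $k$ appears as a row index and those in which it appears as a column index, and multiplying the resulting stationarity equation by $d_k^*>0$, gives
\[
\sum_{(k,j)\in E} a_{kj}\,\frac{d_k^*}{d_j^*}\;=\;\sum_{(i,k)\in E} a_{ik}\,\frac{d_i^*}{d_k^*}\qquad(k=1,\dots,n);
\]
the left-hand side is the $k$th row sum and the right-hand side the $k$th column sum of $DAD^{-1}$ for $D=\diag(\mathbf d^*)$, so $D$ balances $A$.

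For $(i)\Rightarrow(ii)$, let $D$ balance $A$ and put $B=DAD^{-1}$. Since $D$ has positive diagonal, $B$ has the same support as $A$, so $G_B=G_A$ as digraphs; it therefore suffices to show that a non-negative balanced matrix whose underlying undirected graph is connected has a strongly connected digraph. Reading the entries of $B$ as arc flows, balancedness ($\|b_{.,i}\|_1=\|b_{i,.}\|_1$ for every $i$) is exactly flow conservation at every vertex. Passing to the condensation of $G_B$ into strongly connected components, which is a DAG, there is a component $C$ with no arc entering it from outside. Summing the conservation equations over $i\in C$, the flow on arcs internal to $C$ cancels and the total external in-flow of $C$ is $0$, so the total external out-flow of $C$ is $0$; but every arc between $C$ and $V\setminus C$ leaves $C$ and carries positive weight, so there are none, and connectivity of the underlying undirected graph then forces $C=V$.

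For $(ii)\Rightarrow(iii)$, the analytic heart of the argument: because $F(\lambda\mathbf d)=F(\mathbf d)$ for $\lambda>0$, the substitution $d_i=e^{x_i}$ turns $F$ into the convex function $\Phi(x)=\sum_{(i,j)\in E}a_{ij}e^{x_i-x_j}$ on $\RR^n$, which is invariant under translation by $(1,\dots,1)$; hence $\inf_\Omega F=\inf_H\Phi$ with $H=\{x:\sum_i x_i=0\}$, and a minimizer of $\Phi$ on $H$ yields one of $F$ on $\Omega$. So it suffices to show $\Phi$ is coercive on $H$ (its sublevel sets in $H$ are then compact, and the continuous $\Phi$ attains its minimum). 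For $x\in H$ let $p=\argmax_i x_i$ and $q=\argmin_i x_i$; since $\sum_i x_i=0$ we have $x_p-x_q\ge\|x\|_\infty\ge\|x\|_2/\sqrt n$. By strong connectivity there is a directed path from $p$ to $q$ with at most $n-1$ arcs, and telescoping the increments $x_i-x_j$ along it shows some arc $(i,j)\in E$ of the path has $x_i-x_j\ge(x_p-x_q)/(n-1)$, so
\[
\Phi(x)\;\ge\;a_{ij}\,e^{x_i-x_j}\;\ge\;a_{\min}\exp\!\Big(\tfrac{\|x\|_2}{(n-1)\sqrt n}\Big)\ \longrightarrow\ \infty\quad\text{as }\|x\|_2\to\infty .
\]

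The main obstacle is $(ii)\Rightarrow(iii)$, and specifically the coercivity estimate: one has to convert unboundedness of $x$ along the hyperplane $H$ into a single arc carrying a large potential drop, which is precisely where strong connectivity enters, through a directed path of length at most $n-1$ from the highest- to the lowest-potential vertex. The combinatorial bookkeeping in $(i)\Rightarrow(ii)$ — identifying a balanced matrix with a circulation and arguing that a source component of the condensation must be all of $V$ — is the other place to be careful, but it is elementary, and I do not expect to need Lemma~\ref{lemma:reduction} for any of this.
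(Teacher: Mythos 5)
Your proposal is correct, and it is a complete, self-contained proof of a statement that the paper itself does not prove: the authors simply refer to Kalantari, Khachiyan, and Shokoufandeh~\cite[Theorem~1]{khachiyan} (and they attribute the underlying characterization $(i)\Leftrightarrow(ii)$ to Eaves et al.~\cite{eaves}). All three of your implications check out. Your $(iii)\Rightarrow(i)$ step is exactly the first-order stationarity computation that the paper does carry out, in the proof of Corollary~\ref{cor: minimizer balances}, so there you are reproducing the paper's own reasoning. The remaining two legs supply what the paper delegates to the citation: for $(i)\Rightarrow(ii)$, reading the balanced matrix as a circulation and summing flow conservation over a source component of the condensation is sound (all arc weights are strictly positive, so zero external out-flow really does mean no external arcs, and undirected connectivity then forces the component to be all of $V$); for $(ii)\Rightarrow(iii)$, passing to $\Phi(x)=\sum_{(i,j)\in E}a_{ij}e^{x_i-x_j}$, restricting to the hyperplane $\sum_i x_i=0$ by translation invariance, and proving coercivity via a directed path of length at most $n-1$ from the maximum-potential vertex to the minimum-potential vertex is exactly where strong connectivity must enter, and your estimate $\Phi(x)\ge a_{\min}\exp\bigl(\|x\|_2/((n-1)\sqrt n)\bigr)$ is valid (using $x_p-x_q\ge\|x\|_\infty$ on the hyperplane). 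This change of variables to $e^{x_i}$ and the compactness-of-sublevel-sets argument is in the same spirit as the convex-optimization framework the paper borrows from~\cite{khachiyan}, so your route is the natural one; you are also right that Lemma~\ref{lemma:reduction} plays no role here. The only cosmetic caveat is the degenerate case $n=1$ (empty arc set), where all three statements hold trivially and your path argument is vacuous; a one-line remark dispatches it.
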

We refer the reader to~\cite[Theorem~1]{khachiyan} for a proof. 
We have the following corollary.
\begin{corollary}\label{cor: minimizer balances}
$\mathbf{d}^*$ minimizes $F$ over $\Omega$ if and only if 
$D^*=\diag(d^*_1,\ldots,d^*_n)$ balances $A$.
\end{corollary}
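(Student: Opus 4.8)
The plan is to prove Corollary~\ref{cor: minimizer balances} as a short consequence of Theorem~\ref{thm:balancable} together with a first-order optimality calculation for the smooth convex function $F$. First I would observe that $\Omega$ is an open set and that $F$ is differentiable on $\Omega$, so any minimizer $\dd^*$ of $F$ over $\Omega$ is a critical point, i.e.\ $\nabla F(\dd^*) = 0$. Computing the partial derivatives directly from $F(\dd) = \sum_{(i,j)\in E} a_{ij}\frac{d_i}{d_j}$, the contribution of the variable $d_k$ comes from arcs leaving $k$ (where $d_k$ appears in the numerator) and arcs entering $k$ (where $d_k$ appears in the denominator), giving
\[
\frac{\partial F}{\partial d_k}(\dd) \;=\; \frac{1}{d_k}\sum_{j:\,(k,j)\in E} a_{kj}\frac{d_k}{d_j} \;-\; \frac{1}{d_k}\sum_{i:\,(i,k)\in E} a_{ik}\frac{d_i}{d_k}.
\]
Setting $D = \diag(d_1,\ldots,d_n)$ and recalling that the $(i,j)$ entry of $DAD^{-1}$ is $a_{ij}d_i/d_j$, the first sum is exactly $\|(DAD^{-1})_{k,.}\|_1$ and the second is $\|(DAD^{-1})_{.,k}\|_1$. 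Hence $d_k\,\frac{\partial F}{\partial d_k}(\dd) = \|(DAD^{-1})_{k,.}\|_1 - \|(DAD^{-1})_{.,k}\|_1$, and since $d_k>0$ on $\Omega$, the condition $\nabla F(\dd)=0$ is equivalent to $\|(DAD^{-1})_{k,.}\|_1 = \|(DAD^{-1})_{.,k}\|_1$ for all $k$, i.e.\ to $D$ balancing $A$.

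This already gives one direction cleanly: if $\dd^*$ minimizes $F$ over $\Omega$, then $\nabla F(\dd^*)=0$, so $D^*$ balances $A$. For the converse, I would use convexity: $F$ is convex on the convex set $\Omega$ (this can be seen by writing $a_{ij}d_i/d_j$ as a composition that is convex in $(d_i,d_j)$ on the positive orthant, or by noting $F$ is a sum of such terms), so any critical point in the open set $\Omega$ is automatically a global minimizer over $\Omega$. Thus if $D^*$ balances $A$, then $\nabla F(\dd^*)=0$ by the computation above, and convexity upgrades this to $F(\dd^*) = \inf_{\dd\in\Omega} F(\dd)$; combined with Theorem~\ref{thm:balancable}(i)$\Leftrightarrow$(iii) (which guarantees the infimum is attained), $\dd^*$ is a minimizer. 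This closes the equivalence.

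The only subtlety — and the step I would be most careful about — is that $\Omega$ is open and unbounded, so I should make sure I am not silently assuming a minimizer exists when proving the ``balances $\Rightarrow$ minimizes'' direction. This is exactly what Theorem~\ref{thm:balancable} provides: under the standing assumption that the underlying undirected graph is connected, balanceability is equivalent to strong connectivity of $G_A$ and to the existence of a minimizer $\dd^*$, so whenever some $D^*$ balances $A$ we already know the infimum of $F$ is attained, and the critical-point-plus-convexity argument identifies $\dd^*$ as the (a) minimizer. I would also remark in passing that $F$ is invariant under scaling $\dd \mapsto \lambda\dd$, which is consistent with the well-known fact that the balancing diagonal is unique only up to scalar multiples (cf.\ Grad~\cite{grad}, Hartfiel~\cite{hartfiel}); this does not affect the statement, since the corollary is phrased as an equivalence between the set of minimizers and the set of balancing diagonals.
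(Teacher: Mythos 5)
Your gradient computation and the forward direction are correct and essentially identical to the paper's argument: $d_k\,\partial F/\partial d_k$ equals the row-minus-column imbalance of $DAD^{-1}$ at index $k$, and an interior minimizer of $F$ on the open set $\Omega$ is a critical point, hence balanced. The problem is the converse direction, where you assert that $F$ is convex on $\Omega$ because each term $a_{ij}d_i/d_j$ is "convex in $(d_i,d_j)$ on the positive orthant." That claim is false: the Hessian of $g(x,y)=x/y$ on $\{x,y>0\}$ is $\begin{pmatrix} 0 & -1/y^2 \\ -1/y^2 & 2x/y^3\end{pmatrix}$, whose determinant is $-1/y^4<0$, so $g$ is indefinite rather than convex (concretely, $g$ violates the midpoint inequality between $(1,1)$ and $(3,1.8)$). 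Hence $F$ itself is not convex in the variables $d$, and the step "critical point of $F$ in $\Omega$ $\Rightarrow$ global minimizer of $F$" is unsupported as you have written it; invoking Theorem~\ref{thm:balancable}(i)$\Leftrightarrow$(iii) only tells you the infimum is attained somewhere, not that your particular critical point attains it.

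The repair is exactly the change of variables the paper introduces right after the corollary: set $x_i=\ln d_i$ and work with $f(\x)=\sum_{i,j}a_{ij}e^{x_i-x_j}$ as in Equation~\eqref{def:F}, which \emph{is} convex, being a nonnegative combination of exponentials of affine functions. Since $\partial f/\partial x_i = d_i\,\partial F/\partial d_i$ and $d_i>0$, the point $\dd^*$ is critical for $F$ on $\Omega$ iff $\x^*=\ln\dd^*$ is critical for $f$ on $\RR^n$; convexity of $f$ then upgrades criticality to global minimality of $f$, and the bijection $\dd\mapsto\ln\dd$ between $\Omega$ and $\RR^n$ transfers this back to $F$. (Alternatively, one can argue via the uniqueness, up to scalar multiples, of the balancing matrix together with the existence of a minimizer from Theorem~\ref{thm:balancable}.) With that substitution your proof goes through and is essentially the same as the paper's, which is equally brief about the converse but never relies on convexity of $F$ in the $d$-coordinates.
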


\begin{proof}
As $F$ attains its infimum at $\mathbf{d}^*\in\Omega$,
its gradient $\nabla F$ satisfies $\nabla F(\mathbf{d}^*) = 0$.
Also,
$\frac{\partial{F}(\mathbf{d}^*)}{\partial{d_i}}  = 0$ if and only if 
$\sum_{j=1}^n a_{ij}\cdot(d^*_i/d^*_j) = \sum_{j=1}^n a_{ji}\cdot(d^*_j/d^*_i)$
for all $i\in[n]$. In other words, $\nabla F(\mathbf{d}^*) = 0$ if and 
only if the matrix $D^*A{D^*}^{-1}$ is balanced where 
$D^*=\diag(d^*_1,\ldots,d^*_n)$. Thus, $\mathbf{d}^*$ minimizes $F$ 
over $\Omega$ if and only if $D^*=\diag(d^*_1,\ldots,d^*_n)$ balances $A$.
\end{proof}
 
It can also be shown that under the assumption of Theorem~\ref{thm:balancable}, 
the balancing matrix $D^*$ is unique up to a scalar factor 
(see Osborne~\cite{osborne} and Eaves et al.~\cite{eaves}). Therefore, the 
problem of balancing matrix $A$ can be reduced to optimizing the function $F$. 
Since we are optimizing over the set $\Omega$ of strictly positive vectors, we 
can apply a change of variables $\dd=(e^{x_1},\ldots,e^{x_n})\in\mathbb{R}^n$ 
to obtain a convex objective function:
\begin{equation}\label{def:F}
f(\x) = f_A(\x) = \sum_{i,j=1}^n  a_{ij}e^{x_i - x_j}.
\end{equation}
Kalantari et al.~\cite{khachiyan} use the convex function $f$ because it can be 
minimized using the ellipsoid algorithm. We do not need the convexity of $f$, 
and use $f$ instead of $F$ only because it is more convenient to work with,
and it adds some intuition. 
Notice that the partial derivative of $f$ with respect to $x_i$ is
\begin{equation}\label{eq:partial}
\frac{\partial{f(\x)}}{\partial{x_i}} = \sum_{j=1}^n a_{ij}\cdot e^{x_i - x_j} - 
\sum_{j=1}^n a_{ji}\cdot e^{x_j - x_i},
\end{equation}
which is precisely the difference between the $L_1$ norms of the $i$th row and 
the $i$th column of the matrix $DAD^{-1}$, where $D=\diag(e^{x_1}, \ldots ,e^{x_n})$. 
Also, by definition, the diagonal matrix 
$\diag(e^{x_1},\ldots,e^{x_n})$ $\eps$-balances $A$ iff
\begin{equation}\label{eq:stop condition}
\frac{\|\nabla f(\x)\|_2}{f(\x)} =
\frac{\sqrt{\sum_{i=1}^n \left(\sum_{j=1}^n a_{ij}e^{x_i - x_j} - 
\sum_{j=1}^n a_{ji}e^{x_j - x_i}\right)^2}}{\sum_{i,j=1}^n  a_{ij}e^{x_i - x_j}}\le \eps.
\end{equation}

We now state and prove a key lemma that our analysis uses.
The lemma uses combinatorial flow and circulation arguments 
to measure progress 
by bounding $f(\x)-f(\x^*)$ in terms of $\|\nabla f(\x)\|_1$ which is a 
global measure of imbalances of all vertices. 
\begin{lemma}\label{lemma:opt lower}
Let $f$ be the function defined in Equation~\eqref{def:F}, and let $\x^*$ 
be a global minimum of $f$. Then, for all $\x\in\mathbb{R}^n$, 
$f(\x)-f(\x^*) \le \frac{n}{2}\cdot\|\nabla f(\x)\|_1$.
\end{lemma}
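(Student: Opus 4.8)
The plan is to exhibit a concrete path in parameter space from $\x$ to (a point equivalent to) $\x^*$ along which the decrease in $f$ is controlled by $\|\nabla f(\x)\|_1$, using the network-flow interpretation of the balancing problem. Here is the intuition. At $\x^*$ the matrix $D^*AD^{*-1}$ is balanced, so its arc weights form a valid circulation $g^*$ on $G_A$. At $\x$ the arc weights $a_{ij}e^{x_i-x_j}$ form a flow $g$ that is \emph{not} a circulation; the net excess/deficit at vertex $i$ is exactly $\partial f(\x)/\partial x_i$, and $\sum_i |\partial f(\x)/\partial x_i| = \|\nabla f(\x)\|_1$. Since $f(\x) = \sum_{(i,j)\in E} g(i,j)$ and $f(\x^*) = \sum_{(i,j)\in E} g^*(i,j)$, we want to bound the difference of these two arc-sums.

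First I would reduce the problem by rescaling: replace $A$ by $A' = D A D^{-1}$ with $D = \diag(e^{x_1},\dots,e^{x_n})$, so that $f_{A'}(\mathbf 0) = f_A(\x)$ and $\nabla f_{A'}(\mathbf 0) = \nabla f_A(\x)$, and the claim becomes $f_{A'}(\mathbf 0) - f_{A'}(\x^{*\prime}) \le \tfrac{n}{2}\|\nabla f_{A'}(\mathbf 0)\|_1$. In other words it suffices to prove the lemma at the point $\x = \mathbf 0$ for an arbitrary matrix $A$. Now $f(\mathbf 0) = \sum_{i,j} a_{ij}$, the flow $g = (a_{ij})$ has excess vector $b$ with $b_i = \partial f(\mathbf 0)/\partial x_i = \alo{i} - \ali{i}$ (so $\sum_i b_i = 0$ and $\sum_i |b_i| = \|\nabla f(\mathbf 0)\|_1$), and $f(\x^*) \le f(\mathbf 0)$ is the total weight of \emph{some} circulation obtainable from $A$ by diagonal scaling. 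The key combinatorial step is this: I claim one can route the excesses away at a cost, in terms of total-weight decrease, of at most $\tfrac n2 \|\nabla f(\mathbf 0)\|_1$. Concretely, since $G_A$ is strongly connected, pair up surplus vertices with deficit vertices and send the surplus along directed paths of length at most $n$ (in fact at most $n-1$); scaling down the rows/columns to kill a surplus $b_i>0$ at vertex $i$ removes weight $b_i$ from the graph but may create new imbalances downstream — however, carefully choosing the scaling (this is essentially running Osborne-type steps, or equivalently pushing flow along a path) shows that eliminating one unit of $L_1$-imbalance costs at most $n/2$ units of weight, because a unit of excess must traverse at most $n/2$ arcs on average to be absorbed. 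Summing over all the imbalance gives total weight removed at most $\tfrac n2\sum_i\max\{b_i,0\} \cdot 2 = \tfrac n2 \|\nabla f(\mathbf 0)\|_1$ wait — more carefully, $\sum_i \max\{b_i,0\} = \tfrac12\|\nabla f(\mathbf 0)\|_1$, and the path length bound $n-1 < n$ per unit, combined with the two endpoints of each unit being ``half-counted,'' yields the factor $\tfrac n2$.

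I expect the main obstacle to be making the routing argument rigorous: one must produce an actual diagonal scaling $D'$ (not just an abstract flow) such that $D'AD'^{-1}$ has strictly smaller total weight, with the weight drop bounded below by a constant times the imbalance we remove, and iterate/limit this to reach $\x^*$. The cleanest route is probably to work directly at the level of flows rather than scalings: write the minimum total weight of a circulation reachable by scaling as the value of a min-cost-circulation-type problem, and observe that decomposing $g - g^*$ (a flow with excess vector $b$ against the ``circulation'' $g^*$) into at most $n$ path/cycle pieces, each path carrying flow from a surplus to a deficit vertex and having length $< n$, gives $\sum_{(i,j)} g(i,j) - \sum_{(i,j)} g^*(i,j) \le \sum_{\text{paths } P} (\text{flow on }P)\cdot|P| \le n\cdot\sum_i\max\{b_i,0\} = \tfrac n2\|\nabla f(\x)\|_1$, since cycles contribute nothing to the difference of arc-sums and each path's contribution to the left side is its flow value times its edge count. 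The subtlety to get right is that $g^*$ itself is the scaled matrix at $\x^*$ — one needs that $g$ and $g^*$ live on the same support $E$ (true, since scaling preserves the zero pattern) so that $g - g^*$ is a genuine flow on $G_A$ with the stated excess vector, and that a path decomposition of a flow into at most $|V| = n$ directed paths plus cycles exists (standard). Closing this flow-decomposition computation carefully is the crux; everything else is bookkeeping from the reduction to $\x = \mathbf 0$.
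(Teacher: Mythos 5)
There is a genuine gap, and it sits exactly at the step you call the crux. Your final ``cleanest route'' uses only two facts about $g^*$: that it is a circulation and that it has the same support as $g$. But the claimed inequality is false under those hypotheses alone: take $\x$ with $g$ already balanced (so $\nabla f(\x)=0$) and let $g^{*}_{\mathrm{circ}}=\tfrac12 g$, a circulation with the same support; then $\sum_{(i,j)}g(i,j)-\sum_{(i,j)}g^{*}_{\mathrm{circ}}(i,j)>0$ while your bound gives $0$. In the decomposition of $g-g^{*}_{\mathrm{circ}}$ there are no path pieces at all, only cycles, and they carry the entire (positive) difference of arc-sums --- so the assertion that ``cycles contribute nothing to the difference of arc-sums'' is not a general fact about flow decompositions. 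It can only be rescued by the multiplicative structure relating $g$ and $g^*$, namely that diagonal scaling preserves the product of arc weights around every directed cycle of $G_A$; that invariance, combined with AM--GM, is what lower-bounds $f(\x^*)$, and it is absent from your argument (you explicitly reduce the role of the scaling relationship to ``same support,'' which cannot suffice). A secondary problem: $g-g^*$ is signed (on some arcs $g^*>g$), so it is not ``a genuine flow on $G_A$''; to decompose it you must reverse the arcs where it is negative, and then each path or cycle contributes its flow times (number of forward arcs minus number of reversed arcs), not its length, and again the cycle terms need not vanish. (Also, such a decomposition generally has up to $m$, not $n$, pieces, though that by itself would not hurt the bound.)

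For comparison, the paper's proof never forms $g-g^*$. It first repairs $g$ into a circulation by adding at most $n$ surplus-to-deficit arcs carrying total flow $\tfrac12\|\nabla f(\x)\|_1$, then peels off cycles through the added arcs, losing at most $\tfrac n2\|\nabla f(\x)\|_1$ of weight and leaving a sub-circulation $B''$ with $b''_{ij}\le b_{ij}$ supported on the original arcs; second --- and this is the step your proposal is missing --- it proves $f(\x^*)\ge\sum_{i,j}b''_{ij}$ by decomposing $B''$ into directed cycles of $G_A$ and applying the arithmetic--geometric mean inequality, using that the factors $e^{x^{**}_i-x^{**}_j}$ telescope to $1$ around each directed cycle (equivalently, balancing preserves cycle products). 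If you want to keep your route, you must inject that same cycle-product argument to control the cycle pieces of your decomposition; the path pieces alone cannot carry the proof.
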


\begin{proof}
Recall that $f(\x) = f_A(\x)$ is the sum of entries of a matrix 
$B = (b_{ij})$ defined by $b_{ij} = a_{ij}\cdot e^{x_i-x_j}$.
Notice that 
$f(\x) = f_B(\vec{\mathbf{0}})$, and $f(\x^*) = f_B(\x^{**})$,
where $\x^{**} = \x^* - \x$.  
Alternatively, $f(\x)$ is the sum of flows (or weights) of the arcs of 
$G_B$, and $f(\x^*)$ is the sum of flows of the arcs of 
a graph $G^*$ (an arc $ij$ of $G^*$ carries a flow of 
$a_{ij}\cdot e^{x^*_i-x^*_j}$). 
Notice that $G_B$ and $G^*$ have the same set 
of arcs, but with different weights. By Equation~\eqref{eq:partial},
$\|\nabla f_A(\x)\|_1 = \sum_{i=1}^n \big|\|b_{.,i}\|_1-\|b_{i,.}\|_1\big|$, 
i.e., it is the sum over all the nodes of $G_B$ of the difference 
between the flow into the node and flow out of it. Also notice that 
$G_B$ is unbalanced (else the statement of the lemma is trivial), 
however $G^*$ is balanced. Therefore, the arc flows in $G^*$, but
not those in $G_B$, form a valid circulation. 

Our proof now proceeds in two main steps. In the first step we show 
a way of reducing the flow on some arcs of $G_B$, such that the 
revised flows make every node balanced (and thus form a valid
circulation). We also make sure that 
the total flow reduction is at most $\frac{n}{2}\cdot\|\nabla f_A(\x)\|_1$. 
In the second step we show that sum of revised flows of all the arcs 
is a lower bound on $f(\x^*)$. These two steps together prove the lemma. 

We start with the first step. The nodes of $G_B$ are not balanced. Let 
$S$ and $T$ be a partition of the unbalanced nodes of $G_B$, with 
$S=\left\{i\in[n]:\ \|b_{.,i}\|_1 > \|b_{i,.}\|_1\right\}$ and 
$T=\left\{i\in[n]:\ \|b_{.,i}\|_1 < \|b_{i,.}\|_1\right\}$. That is, 
the flow into a node in $S$ exceeds the flow out of it, and the
flow into a node in $T$ is less than the flow out of it. We have
that
$$
\sum_{i\in S}(\|b_{.,i}\|_1-\|b_{i,.}\|_1) - \sum_{i\in T}(\|b_{i,.}\|_1-\|b_{.,i}\|_1) = 
\sum_{i\in[n]}(\|b_{.,i}\|_1-\|b_{i,.}\|_1)  = 0.
$$
Thus, we can view each node $i\in S$ as a source with supply 
$\|b_{.,i}\|_1-\|b_{i,.}\|_1$, and each node $i\in T$ as a sink with 
demand $\|b_{i,.}\|_1-\|b_{.,i}\|_1$, and the total supply equals 
the total demand. We now add some weighted arcs connecting 
the nodes in $S$ to the nodes in $T$. These arcs carry the supply
at the nodes in $S$ to the demand at the nodes in $T$. Note that 
we may add arcs that are parallel to some existing arcs in $G_B$. 
Such arcs can be replaced by adding flow to the parallel existing 
arcs of $G_B$. In more detail, to compute the flows of the added 
arcs (or the added flow to existing arcs), we add arcs inductively 
as follows. We start with any pair of nodes $i\in S$ and $j\in T$, and 
add an arc from $i$ to $j$ carrying flow equal to the minimum between 
the supply at $i$ and the demand at $j$. Adding this arc will balance 
one of its endpoints, but in the new graph the sum of supplies at the
nodes of $S$ is still equal to the sum of demands at the nodes of $T$, 
so we can repeat the process. (Notice that either $S$ or $T$ or both
lose one node.) Each additional arc balances at least one unbalanced 
node, so $G_B$ gets balanced by adding at most $n$ additional arcs 
from nodes in $S$ to nodes in $T$. The total flow on the added arcs 
is exactly $\sum_{i\in S}(\|b_{.,i}\|_1-\|b_{i,.}\|_1) = \frac{1}{2}\cdot\|\nabla f(\x)\|_1$. 

Let $E'$ be the set of newly added arcs, and let $G_{B'}$ be the new graph
with arc weights given by $B' = (b'_{ij})$. 
Since $G_{B'}$ is balanced, the arc flows form a valid circulation. We next 
decompose the total flow of arcs into cycles. Consider a cycle $C$ in 
$G_{B'}$ that contains at least one arc from $E'$ (i.e., $C\cap E'\ne\emptyset$).
Reduce the flow on all arcs in $C$ by $\alpha = \min_{ij\in C} b'_{ij}$. 
This can be viewed as peeling off from $G_{B'}$ a circulation carrying 
flow $\alpha$. This reduces the flow on at least one arc to zero, and 
the remaining flow on arcs is still a valid circulation, so we
can repeat the process. It can be repeated as long as there is positive 
flow on some arc in $E'$. Eliminating the flow on all arcs in $E'$ 
using cycles reduces the total flow on the arcs by at most $n$ times 
the total initial flow on the arcs in $E'$ (i.e., $\frac{n}{2}\cdot\|\nabla f(\xt{1})\|_1$), 
because each cycle contains at most $n$ arcs and its flow $\alpha$
that is peeled off reduces the flow on at least one arc in $E'$ by $\alpha$. 
After peeling off all the flow on all arcs in $E'$, all the arcs with positive 
flow are original arcs of $G_B$. Let $G_{B''}$ be the graph with the 
remaining arcs
and their flows which are given by $B'' = (b''_{ij})$. 
The total flow on the arcs 
of $G_{B''}$ is at least 
$f(\x) + \frac{1}{2}\cdot\|\nabla f(\x)\|_1 -\frac{n}{2}\cdot\|\nabla f(\x)\|_1 \ge 
f(\x) - \frac{n}{2}\cdot\|\nabla f(\x)\|_1$.

Next we show that the total flow on the arcs of $G_{B''}$ is a lower 
bound on $f(\x^*)$. Our key tool for this is the fact that balancing operations 
preserve the product of arc flows on any cycle in the original graph 
$G_B$, because balancing a node $i$ multiplies the flow on the arcs 
into $i$ by some factor $r$ and the flow on the arcs out of $i$ by $\frac{1}{r}$. 
Thus, the geometric mean of the flows of the arcs on any cycle is not 
changed by a balancing operation.
The arc flows in $G_{B''}$ form a valid circulation, and thus can 
be decomposed 
into flow cycles $C_1,\ldots,C_q$ by a similar peeling-off process 
that was described 
earlier. Let $n_1,\ldots, n_q$ be the lengths of cycles, and let 
$\alpha_1,\ldots, \alpha_q$ 
be their flows. The total flow on arcs in $G_{B''}$ is, therefore, 
$\sum_{k=1}^q n_k\cdot\alpha_k$. Notice that, by construction, 
$b''_{ij} \le b_{ij}$, and the decomposition into cycles gives that
$b''_{ij} = \sum_{k:ij\in C_k}\alpha_k$. Thus,
$f(\x^*)= \sum_{i,j=1}^n b_{ij}e^{x^{**}_i-x^{**}_j} \ge 
\sum_{i,j=1}^n b''_{ij} e^{x^{**}_i-x^{**}_j} = 
\sum_{i,j=1}^n\sum_{k:ij\in C_k}\alpha_k e^{x^{**}_i-x^{**}_j}
= \sum_{k=1}^q \sum_{ij\in C_k}\alpha_k e^{x^{**}_i-x^{**}_j} \ge  
\sum_{k=1}^q n_k 
\left(\prod_{ij\in C_k} \alpha_k e^{x^{**}_i-x^{**}_j}\right)^{1/n_k}
= \sum_{k=1}^q n_k\alpha_k = \sum_{i,j = 1}^n b''_{ij}$,
where the last inequality uses the arithmetic-geometric mean inequality.
Notice that the right-hand side is the total flow on the arcs of $G_{B''}$, 
which is at least $f(\x) - \frac{n}{2}\cdot\|\nabla f(\xt{1})\|_1$. Thus, 
$f(\x^*)\ge f(\x) - \frac{n}{2}\cdot\|\nabla f(\x)\|_1$, and this completes 
the proof of the lemma.
\end{proof}

\section{Greedy Balancing}\label{sec:greedy}

Here we present and analyze a greedy variant of the OPR iteration.
Instead of balancing indices in a fixed round-robin order, the greedy 
modification chooses at iteration $t$ an index $i_t$ of $A^{(t)}$ such 
that balancing the chosen index results in the largest decrease in the 
sum of entries of $A^{(t)}$. In other words, we pick $i_t$ such that 
the following equation holds.
\begin{equation}\label{eq:greedy}
i_t = \argmax_{i\in[n]}{\left(\sqrt{\|a^{(t)}_{.,i}\|_1}-\sqrt{\|a^{(t)}_{i,.}\|_1}\right)^2}
\end{equation}

We give two analyses of this variant, one that shows that the
number of operations is nearly linear in the size of $G_A$,
and another that shows that the number of operations is
nearly linear in $1/\eps$. More specifically, we prove the
following theorem.
\begin{theorem}\label{thm:greedy}
Given an $n\times n$ matrix $A$, let $m = |E(G_A)|$,
the greedy implementation of the OPR iterative algorithm 
outputs an $\eps$-balanced matrix in $K$ iterations 
which cost a total of $O(m + Kn\log n)$ arithmetic operations
over $O(n\log w)$-bit numbers,
where 
$K = O\left(\min\left\{\eps^{-2}\log w,\eps^{-1}n^{3/2}\log(w/\eps)\right\}\right)$.
\end{theorem}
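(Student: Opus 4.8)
\section*{Proof plan}

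The plan is to analyze the greedy iteration through the potential $\Phi^{(t)} = f(\xt{t}) - f(\x^*)$, which by Corollary~\ref{cor: minimizer balances} and the discussion around~\eqref{def:F} is precisely the amount by which the current sum of entries exceeds that of the balanced matrix $A^*$; it is non-negative and, by Lemma~\ref{lemma:reduction}, non-increasing. The crux of the whole argument is a single per-step estimate with \emph{no} loss of a factor of $n$: the greedy step decreases $f$ by at least $\|\nabla f(\xt{t})\|_2^2/(4f(\xt{t}))$. To prove it, write $u_i = \|a^{(t)}_{.,i}\|_1$ and $v_i = \|a^{(t)}_{i,.}\|_1$, so that $\sum_i u_i = \sum_i v_i = f(\xt{t})$ and, by~\eqref{eq:partial}, $|\partial f(\xt{t})/\partial x_i| = |u_i - v_i|$. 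By Lemma~\ref{lemma:reduction} the greedy decrease equals $\max_i(\sqrt{u_i}-\sqrt{v_i})^2$. From $(\sqrt{u_i}+\sqrt{v_i})^2 \le 2(u_i+v_i)$ we get $(\sqrt{u_i}-\sqrt{v_i})^2 \ge (u_i-v_i)^2/(2(u_i+v_i))$, and since the weights $p_i := (u_i+v_i)/(2f(\xt{t}))$ are non-negative and sum to $1$, the maximum is at least the $p$-weighted average, which is $\sum_i p_i (u_i-v_i)^2/(2(u_i+v_i)) = \sum_i (u_i-v_i)^2/(4f(\xt{t})) = \|\nabla f(\xt{t})\|_2^2/(4f(\xt{t}))$. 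Weighting each coordinate by its ``throughput'' $u_i+v_i$ is exactly what turns the useless bound $\max \ge \frac1n\sum$ into a lossless one; I expect this to be the one genuinely new step, with everything else following from it together with Lemmas~\ref{lemma:reduction} and~\ref{lemma:opt lower}.

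Given this estimate, the bound $K = O(\eps^{-2}\log w)$ is immediate. If $A^{(t)}$ is not $\eps$-balanced then $\|\nabla f(\xt{t})\|_2 > \eps f(\xt{t})$ by~\eqref{eq:stop condition}, so the step multiplies $f$ by a factor at most $1 - \eps^2/4$. But $f(\xt{t}) \ge f(\x^*)$ always, and $f(\x^*) \ge 2 a_{\min}$ (balancing preserves the product of weights around every cycle, so on a shortest cycle, of length $\ge 2$, that product is at least $a_{\min}^{\text{length}}$, and AM--GM gives the bound), whereas $f(\xt{1}) = \sum_{ij}|a_{ij}| = w a_{\min}$; hence $f$ cannot be multiplied by $1-\eps^2/4$ more than $O(\eps^{-2}\log w)$ times, so a non-$\eps$-balanced run of that length is impossible.

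For $K = O(\eps^{-1}n^{3/2}\log(w/\eps))$ I feed in Lemma~\ref{lemma:opt lower}: $\Phi^{(t)} \le \tfrac n2\|\nabla f(\xt{t})\|_1 \le \tfrac{n^{3/2}}2\|\nabla f(\xt{t})\|_2$. So at a non-$\eps$-balanced step $\|\nabla f(\xt{t})\|_2 \ge \max\{\eps f(\xt{t}),\, 2\Phi^{(t)}/n^{3/2}\} \ge (2\eps f(\xt{t})\Phi^{(t)}/n^{3/2})^{1/2}$, and plugging this into the per-step estimate yields a decrease of at least $\eps\Phi^{(t)}/(2n^{3/2})$, i.e. $\Phi^{(t+1)} \le (1 - \eps/(2n^{3/2}))\Phi^{(t)}$. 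Separately, the per-step estimate forces the decrease at a non-$\eps$-balanced step to exceed $\gamma := \eps^2 f(\x^*)/4 \ge \eps^2 a_{\min}/2$, so at such a step $\Phi^{(t)} > \Phi^{(t)} - \Phi^{(t+1)} > \gamma$ (using $\Phi^{(t+1)} \ge 0$). Combining, as long as no iterate is yet $\eps$-balanced, $\eps^2 a_{\min}/2 < \Phi^{(t)} \le (1 - \eps/(2n^{3/2}))^{t-1} w a_{\min}$, which caps $t$ at $O(\eps^{-1}n^{3/2}\log(w/\eps))$. The claimed bound on $K$ is the minimum of the two.

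For the cost, maintain the $2n$ row/column sums $u_i,v_i$ and a max-heap keyed by $\varphi_i := (\sqrt{u_i}-\sqrt{v_i})^2$. Initialization from $A$ is $O(m)$, building the heap $O(n)$; each greedy step pops the maximizer $i_t$, rescales row and column $i_t$, and then only $u_j,v_j$ for the at most $n-1$ neighbors $j$ of $i_t$ change, so updating those entries and their heap keys costs $O(n\log n)$, for $O(m + Kn\log n)$ in all. For precision: since balancing preserves cycle-products and keeps $\sum_{ij}a^{(t)}_{ij} \le wa_{\min}$, every nonzero entry of every $A^{(t)}$ stays in $[a_{\min}w^{-(n-1)},\, wa_{\min}]$ --- a dynamic range of $w^{O(n)}$ --- so $O(n\log w)$ bits of relative precision are needed to store the entries and the derived sums and square roots; a routine propagation-of-error argument shows this precision suffices for the returned matrix to still be $\eps$-balanced (after adjusting constants).
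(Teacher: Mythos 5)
Your proposal is correct and follows essentially the same route as the paper: the same key per-step bound $f(\xt{t})-f(\xt{t+1})\ge \|\nabla f(\xt{t})\|_2^2/(4f(\xt{t}))$ proved by the same throughput-weighted averaging (the paper phrases it as $\max_i\frac{(\ali{i}-\alo{i})^2}{\ali{i}+\alo{i}}\ge\frac{\sum_i(\ali{i}-\alo{i})^2}{\sum_i(\ali{i}+\alo{i})}$), the same geometric-decay argument for the $O(\eps^{-2}\log w)$ bound using cycle-product preservation to lower bound $f$, the same contraction $\Phi^{(t+1)}\le(1-\eps/(2n^{3/2}))\Phi^{(t)}$ via Lemma~\ref{lemma:opt lower} for the second bound, and the same priority-queue cost accounting. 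Only cosmetic details differ (e.g., lower bounding $f(\x^*)$ by $2a_{\min}$ rather than $f(\xt{t})$ by $a_{\min}$, and combining the two gradient lower bounds by a geometric mean rather than the paper's chain of inequalities).
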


The proof uses the convex optimization framework introduced in 
Section~\ref{sec:preliminaries}. Recall that 
$A^{(t)} = \bar{D}^{(t)} A ({\bar{D}^{(t)}})^{-1}$. If we let 
$\bar{D}^{(t)}=\diag(e^{x_1^{(t)}}, \ldots, e^{x_n^{(t)}})$, the iterative 
sequence can be viewed as generating a sequence of points 
$\xt{1},\xt{2}, \ldots\allowbreak,\xt{t}, \ldots$ in $\mathbb{R}^n$,  
where $\xt{t} = (x_1^{(t)}, \ldots, x_n^{(t)})$ and 
$A^{(t)} = \bar{D}^{(t)} A ({\bar{D}^{(t)}})^{-1}=(a_{ij}e^{x_i^{(t)}-x_j^{(t)}})_{n\times n}$. 
Initially, $\xt{1}=(0,\ldots,0)$, and $\xt{t+1} = \xt{t} + \alpha_t\mathbf{e}_i$, 
where $\alpha_t = \ln(d^{(t)}_{ii})$ and $\mathbf{e}_i$ is the $i$th vector of 
the standard basis for $\mathbb{R}^n$. By Equation~\eqref{def:F}, the value 
$f(\xt{t})$ is sum of the entries of the matrix $A^{(t)}$. 
The following key lemma allows us to lower bound the decrease in the value 
of $f(\xt{t})$ in terms of a value that can be later related to the stopping condition.
\begin{lemma}\label{lemma:progress}
If index $i_t$ defined in Equation~\eqref{eq:greedy} is picked to 
balance $A^{(t)}$, then 
$f(\xt{t})-f(\xt{t+1)}) \ge \frac{\|\nabla f(\xt{t})\|_2^2}{4f(\xt{t})}$.
\end{lemma}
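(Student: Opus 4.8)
The plan is to lower bound the decrease $f(\xt{t})-f(\xt{t+1})$ in two stages: first express the decrease from balancing the greedy index $i_t$ exactly, then compare it against the contribution of every index individually. By Lemma~\ref{lemma:reduction} applied to the matrix $A^{(t)}$ (recall $f(\xt{t})$ is its total entry sum), balancing index $i_t$ decreases $f$ by exactly $\left(\sqrt{\|a^{(t)}_{.,i_t}\|_1}-\sqrt{\|a^{(t)}_{i_t,.}\|_1}\right)^2$, and by the greedy choice in Equation~\eqref{eq:greedy} this is $\max_{i\in[n]}\left(\sqrt{\|a^{(t)}_{.,i}\|_1}-\sqrt{\|a^{(t)}_{i,.}\|_1}\right)^2$. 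So it suffices to show
\[
\max_{i\in[n]}\left(\sqrt{\|a^{(t)}_{.,i}\|_1}-\sqrt{\|a^{(t)}_{i,.}\|_1}\right)^2 \;\ge\; \frac{\|\nabla f(\xt{t})\|_2^2}{4f(\xt{t})}.
\]

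The key algebraic step is to relate the "square-root gap'' $(\sqrt{c_i}-\sqrt{r_i})^2$ for index $i$, where $c_i=\|a^{(t)}_{.,i}\|_1$ and $r_i=\|a^{(t)}_{i,.}\|_1$, to the imbalance $|c_i-r_i|$, which by Equation~\eqref{eq:partial} equals $\left|\frac{\partial f(\xt{t})}{\partial x_i}\right|$. Using the identity $c_i-r_i=(\sqrt{c_i}-\sqrt{r_i})(\sqrt{c_i}+\sqrt{r_i})$ together with $\sqrt{c_i}+\sqrt{r_i}\le\sqrt{2(c_i+r_i)}$, we get
\[
\left(\sqrt{c_i}-\sqrt{r_i}\right)^2 \;=\; \frac{(c_i-r_i)^2}{(\sqrt{c_i}+\sqrt{r_i})^2} \;\ge\; \frac{(c_i-r_i)^2}{2(c_i+r_i)}.
\]
Now I would sum the numerator over $i$ and bound the denominator: $\sum_{i}(c_i+r_i)=2f(\xt{t})$ since every entry of $A^{(t)}$ is counted once as part of a row sum and once as part of a column sum. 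Hence, by averaging (the max is at least the weighted average, or simply $\max_i (c_i-r_i)^2/(2(c_i+r_i)) \ge \sum_i (c_i-r_i)^2 / \sum_i 2(c_i+r_i)$ via the mediant inequality applied to nonnegative terms), we obtain
\[
\max_{i\in[n]}\left(\sqrt{c_i}-\sqrt{r_i}\right)^2 \;\ge\; \frac{\sum_i (c_i-r_i)^2}{2\sum_i(c_i+r_i)} \;=\; \frac{\|\nabla f(\xt{t})\|_2^2}{4f(\xt{t})},
\]
which is exactly the claimed bound.

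The main obstacle is getting the constant right and justifying the mediant step cleanly: the inequality $\max_i \frac{p_i}{q_i} \ge \frac{\sum_i p_i}{\sum_i q_i}$ for $p_i\ge 0$, $q_i>0$ needs care when some $q_i=0$ (i.e., an isolated or zero-sum index), but such indices contribute $p_i=0$ too and can be dropped, so the bound is unaffected. The only genuinely delicate choice is the inequality $(\sqrt{c_i}+\sqrt{r_i})^2\le 2(c_i+r_i)$, which is just $(\sqrt{c_i}-\sqrt{r_i})^2\ge 0$ rearranged — tight when $c_i=r_i$, harmless there since the numerator vanishes. Everything else is bookkeeping: invoking Lemma~\ref{lemma:reduction} for the exact per-step decrease, Equation~\eqref{eq:partial} to identify $|c_i-r_i|$ with the partial derivative, and the fact that row sums and column sums of $A^{(t)}$ each total $f(\xt{t})$.
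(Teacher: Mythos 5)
Your proof is correct and follows essentially the same route as the paper: the exact decrease from Lemma~\ref{lemma:reduction}, the bound $\left(\sqrt{c_i}-\sqrt{r_i}\right)^2 \ge \frac{(c_i-r_i)^2}{2(c_i+r_i)}$, and the max-versus-average (mediant) step together with $\sum_i (c_i+r_i) = 2f(\xt{t})$ and Equation~\eqref{eq:partial}. Nothing further is needed.
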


\begin{corollary}\label{cor:progress}
If matrix $A^{(t)}$ is not $\eps$-balanced, by balancing index $i_t$ at 
iteration $t$, we have 
$f(\xt{t})-f(\xt{t+1)}) \ge \frac{\eps^2}{4}\cdot f(\xt{t})$.
\end{corollary}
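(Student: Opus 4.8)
The plan is to derive the corollary directly from Lemma~\ref{lemma:progress} together with the reformulation of $\eps$-balancing recorded in Equation~\eqref{eq:stop condition}; no new ideas are needed. First I would invoke Lemma~\ref{lemma:progress}, which applies because the greedy rule~\eqref{eq:greedy} selects the index $i_t$ maximizing the per-step decrease of the objective, giving
$$
f(\xt{t}) - f(\xt{t+1}) \;\ge\; \frac{\|\nabla f(\xt{t})\|_2^2}{4 f(\xt{t})}.
$$

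Next I would translate the hypothesis that $A^{(t)}$ is \emph{not} $\eps$-balanced into a lower bound on $\|\nabla f(\xt{t})\|_2$. Recall that $A^{(t)} = \bar{D}^{(t)} A ({\bar{D}^{(t)}})^{-1}$ with $\bar{D}^{(t)} = \diag(e^{x_1^{(t)}},\ldots,e^{x_n^{(t)}})$, so Equation~\eqref{eq:stop condition} says exactly that $A^{(t)}$ is $\eps$-balanced if and only if $\|\nabla f(\xt{t})\|_2 \le \eps\, f(\xt{t})$. Since $A^{(t)}$ is assumed not to be $\eps$-balanced, we get $\|\nabla f(\xt{t})\|_2 > \eps\, f(\xt{t})$, and hence, squaring (both sides nonnegative), $\|\nabla f(\xt{t})\|_2^2 > \eps^2 f(\xt{t})^2$. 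Substituting this into the bound from Lemma~\ref{lemma:progress} yields
$$
f(\xt{t}) - f(\xt{t+1}) \;\ge\; \frac{\|\nabla f(\xt{t})\|_2^2}{4 f(\xt{t})} \;>\; \frac{\eps^2 f(\xt{t})^2}{4 f(\xt{t})} \;=\; \frac{\eps^2}{4}\, f(\xt{t}),
$$
which is the claimed inequality.

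There is essentially no obstacle here: the corollary is an immediate consequence of Lemma~\ref{lemma:progress} and the stopping-condition identity. The only point requiring a line of justification is that $f(\xt{t})$ is strictly positive, so that dividing by it is legitimate; this holds because $f(\xt{t})$ is the sum of the (positive) entries of $A^{(t)}$, and under the standing assumption that the underlying graph of $G_A$ is connected the matrix $A$ has at least one nonzero entry, so $f(\xt{t}) = \sum_{i,j} a_{ij} e^{x_i^{(t)} - x_j^{(t)}} > 0$.
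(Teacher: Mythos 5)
Your proposal is correct and follows essentially the same route as the paper: apply Lemma~\ref{lemma:progress} and then use Equation~\eqref{eq:stop condition} to convert the hypothesis that $A^{(t)}$ is not $\eps$-balanced into $\|\nabla f(\xt{t})\|_2 > \eps\, f(\xt{t})$, which substituted into the lemma's bound gives the claim. The added remark on the positivity of $f(\xt{t})$ is a harmless (and valid) extra justification that the paper leaves implicit.
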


\begin{proof}[Proof of Theorem~\ref{thm:greedy}]
By Corollary~\ref{cor:progress}, while $A^{(t)}$ is not 
$\eps$-balanced, there exists an index $i_t$ to balance such 
that $f(\xt{t})-f(\xt{t+1)}) \ge \frac{\eps^2}{4}\cdot f(\xt{t})$.
Thus, $f(\xt{t+1})\le \left(1-\displaystyle\frac{\eps^2}{4}\right)\cdot f(\xt{t})$. 
Iterating for $t$ steps yields
$f(\xt{t+1})\le \left(1-\frac{\eps^2}{4}\right)^{t}\cdot f(\xt{1})$.
So, on the one hand, $f(\xt{1})=\sum_{i,j=1}^n a_{ij}$ since $f(\xt{1})$ 
is the sum of entries in $A^{(1)}$. On the other hand, we argue that the 
value of $f(\xt{t+1})$ is at least $\min_{(i,j) \in E}a_{ij}$. To see this, 
consider a directed cycle in the graph $G_A$. It's easy to see that 
balancing operations preserve the product of weights of the arcs on 
any cycle. Thus, the weight of at least one arc in the cycle is at least 
its weight in the input matrix $A$. Therefore,
$a_{\min}\le f(\xt{t+1})\le \left(1-\frac{\eps^2}{4}\right)^{t}\cdot f(\xt{1})=\left(1-\frac{\eps^2}{4}\right)^{t}\cdot \sum_{i,j=1}^n a_{ij}$.
Thus, 
$t\le\frac{4}{\eps^2}\cdot \ln w$ 
and this is an upper bound on the number of balancing operations 
before an $\eps$-balanced matrix is obtained. 
The algorithm initially computes $\ali{i}$ and $\alo{i}$ for all $i\in[n]$ in 
$O(m)$ time. Also the algorithm initially computes the value of $\left(\sqrt{\alo{i}}-\sqrt{\ali{i}}\right)^2$ for all $i$ in $O(m)$ time 
and inserts the values in a priority queue in $O(n\log n)$ time. 
The values of $\|a_{i,.}^{(t)}\|_1$, $\|a_{.,i}^{(t)}\|_1$ for all $i$ 
and $\left(\sqrt{\|a_{i,.}^{(t)}\|_1} - \sqrt{\|a_{.,i}^{(t)}\|_1}\right)^2$ 
are updated after each balancing operation. In each iteration the 
weights of at most $n$ arcs change. Updating the values of 
$\|a_{i,.}^{(t)}\|_1$ and $\|a_{.,i}^{(t)}\|_1$ takes $O(n)$ time and 
updating the values of 
$\left(\sqrt{\|a_{i,.}^{(t)}\|_1} - \sqrt{\|a_{.,i}^{(t)}\|_1}\right)^2$ 
involves at most $n$ updates of values in the priority queue, each 
taking time $O(\log n)$. Thus, the first iteration takes $O(m)$ 
operations and each iteration after that takes $O(n\log n)$ operations, 
so the total running time of the algorithm in terms of arithmetic operations 
is $O(m + (n\log n\log w)/\eps^2)$. 

An alternative analysis completes the proof.
Notice that $\|\nabla f(\xt{t})\|_2\le \|\nabla f(\xt{t})\|_1 \le \sqrt{n}\cdot \|\nabla f(\xt{t})\|_2$.
Therefore,
$f(\xt{t})-f(\xt{t+1)}) \ge \frac{\|\nabla f(\xt{t})\|_2^2}{4f(\xt{t})}
\ge\frac{\|\nabla f(\xt{t})\|_2}{4\sqrt{n}\cdot f(\xt{t})}\cdot \|\nabla f(\xt{t})\|_1
\ge \frac{1}{2n^{3/2}}\cdot\frac{\|\nabla f(\xt{t})\|_2}{f(\xt{t})}\cdot(f(\xt{t}) - f(\x^*))$,
where the first inequality follows from Lemma~\ref{lemma:progress},
and the last inequality follows from Lemma~\ref{lemma:opt lower}.
Therefore, while $A^t$ is not $\eps$-balanced (so
$\frac{\|\nabla f(\xt{t})\|_2}{f(\xt{t})} > \eps$), we have that
$f(\xt{t})-f(\xt{t+1)}\ge \frac{\eps}{2n^{3/2}}\cdot (f(\xt{t})-f(\x^*))$.
Rearranging the terms, we get
$f(\xt{t+1})-f(\x^*)\le \left(1-\frac{\eps}{2n^{3/2}}\right)\cdot (f(\xt{t}) - f(\x^*))$.
Therefore,
$f(\xt{t+1})-f(\x^*)\le \left(1-\frac{\eps}{2n^{3/2}}\right)^t\cdot (f(\xt{1}) - f(\x^*))$.
Notice that by Lemma~\ref{lemma:progress},
$f(\xt{t+1})-f(\x^*)\ge f(\xt{t+1}) - f(\xt{t+2})\ge 
\left(\frac{\|\nabla f(\xt{t+1})\|_2}{2f(\xt{t+1})}\right)^2\cdot f(\xt{t+1}) \ge
\left(\frac{\|\nabla f(\xt{t+1})\|_2}{2f(\xt{t+1})}\right)^2\cdot a_{\min}$.
On the other hand, $f(\xt{1}) - f(\x^*)\le f(\xt{1})\le \sum_{i,j=1}^n a_{ij}$.
Thus, for  
$t=2\eps^{-1}\cdot n^{3/2}\ln(4w/\eps^2)$,
we have that $\frac{\|\nabla f(\xt{t+1})\|_2}{f(\xt{t+1})}\le\eps$, so the matrix
is $\eps$-balanced.
\end{proof}

\section{Round-Robin Balancing (the original algorithm)}\label{original}
Recall that original Osborne-Parlett-Reinsch algorithm balances indices 
in a fixed round-robin order. Although the greedy variant of the OPR 
iteration is a simple modification of the implementation, the convergence
rate of the original algorithm (with no change) is interesting. This is 
important because the original algorithm has a slightly simpler 
implementation, and also because this is the implementation used in 
almost all numerical linear algebra software including MATLAB, LAPACK 
and EISPACK (refer to ~\cite{spectra2005, kressner} for further background). 
We give some answer to this question in the following theorem.
\begin{theorem}\label{thm:original}
Given an $n\times n$ matrix $A$, the original implementation of the 
OPR iteration outputs an $\eps$-balanced matrix in $O(\eps^{-2}n^2\log w)$ 
iterations totalling $O(\eps^{-2}mn\log w)$ arithmetic operations
over $O(n\log w)$-bit numbers
($m$ is the number of non-zero entries of $A$).
\end{theorem}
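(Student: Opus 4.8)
The plan is to mimic the first analysis of the greedy variant, but to account for the fact that round-robin does not always pick the best index. The key quantity is again $f(\xt{t})$, the sum of entries of $A^{(t)}$, which is non-increasing by Lemma~\ref{lemma:reduction}, bounded above initially by $\sum_{i,j}a_{ij}$ and bounded below by $a_{\min}$ (by the cycle-product invariance argument already used in the proof of Theorem~\ref{thm:greedy}). So it suffices to show that over any block of $n$ consecutive iterations---one full round-robin sweep---the value of $f$ drops by at least a $(1-\Theta(\eps^2/n))$ factor whenever the matrix at the start of the block is not $\eps$-balanced. Chaining $O(\eps^{-2}n\log w)$ such blocks, i.e.\ $O(\eps^{-2}n^2\log w)$ iterations, then forces $f$ below $a_{\min}$ unless $\eps$-balance was reached.

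First I would set up the block. Fix $t_0$ and suppose $A^{(t_0)}$ is not $\eps$-balanced, so $\|\nabla f(\xt{t_0})\|_2 > \eps\, f(\xt{t_0})$; hence some coordinate $i$ has $\bigl|\tfrac{\partial f}{\partial x_i}(\xt{t_0})\bigr| \ge \tfrac{\eps}{\sqrt n} f(\xt{t_0})$, and writing $\tfrac{\partial f}{\partial x_i} = \|a^{(t_0)}_{.,i}\|_1 - \|a^{(t_0)}_{i,.}\|_1$ while $\|a^{(t_0)}_{.,i}\|_1 + \|a^{(t_0)}_{i,.}\|_1 \le f(\xt{t_0})$, a short computation shows $\bigl(\sqrt{\|a^{(t_0)}_{.,i}\|_1} - \sqrt{\|a^{(t_0)}_{i,.}\|_1}\bigr)^2 \ge \tfrac{(\,\partial f/\partial x_i)^2}{2\,(\|a^{(t_0)}_{.,i}\|_1+\|a^{(t_0)}_{i,.}\|_1)} \ge \tfrac{\eps^2}{2n}\, f(\xt{t_0})$. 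So if index $i$ were balanced right now, Lemma~\ref{lemma:reduction} would give a drop of at least $\tfrac{\eps^2}{2n} f(\xt{t_0})$. During one sweep, index $i$ does get balanced at some step $t_1 \in [t_0, t_0+n)$. The subtlety is that by the time we reach step $t_1$ the row/column $L_1$ norms of index $i$ may have changed. But each intervening balancing step only scales entries of other rows/columns, and---crucially---$f$ is monotone non-increasing, so $\|a^{(t_1)}_{.,i}\|_1 + \|a^{(t_1)}_{i,.}\|_1 \le f(\xt{t_1}) \le f(\xt{t_0})$; what I still need is a lower bound on $\bigl|\,\|a^{(t_1)}_{.,i}\|_1 - \|a^{(t_1)}_{i,.}\|_1\bigr|$. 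Here the cleanest route is a potential/telescoping argument: over the $\le n$ steps from $t_0$ to $t_1$ the total decrease in $f$ is $\sum_{t=t_0}^{t_1-1}\bigl(\sqrt{\|a^{(t)}_{.,\sigma(t)}\|_1}-\sqrt{\|a^{(t)}_{\sigma(t),.}\|_1}\bigr)^2$; if this total already exceeds $\tfrac{\eps^2}{8n} f(\xt{t_0})$ we are done for the block, and otherwise each individual scaling factor applied is within $1\pm O(\eps^2/n)^{1/2}$ of $1$, so $\|a_{.,i}\|_1$ and $\|a_{i,.}\|_1$ each move by only a $(1\pm o(1))$ factor between $t_0$ and $t_1$, hence $\bigl|\,\|a^{(t_1)}_{.,i}\|_1-\|a^{(t_1)}_{i,.}\|_1\bigr| \ge \tfrac12 \bigl|\,\|a^{(t_0)}_{.,i}\|_1-\|a^{(t_0)}_{i,.}\|_1\bigr| \ge \tfrac{\eps}{2\sqrt n} f(\xt{t_0})$, and balancing $i$ at step $t_1$ drops $f$ by $\Omega(\eps^2/n)\cdot f(\xt{t_0})$. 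Either way the block achieves the claimed multiplicative decrease.

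Next I would assemble the iteration count: $f$ starts at $\sum_{i,j}a_{ij}$ and cannot go below $a_{\min}$, so the number of blocks in which $A$ is not yet $\eps$-balanced is $O(\eps^{-2} n \log w)$, giving $O(\eps^{-2} n^2 \log w)$ balancing steps. For the arithmetic-operations count: each balancing step rescales the $\le n$ nonzero entries of one row and one column and updates the two affected norm sums, which is $O(n)$ work, but recomputing the $L_1$ row/column norms touched by those rescalings costs $O(m)$ in the worst case per step if done naively; more carefully, a balancing of index $i$ changes $a^{(t)}_{ij}$ and $a^{(t)}_{ji}$ for the at most $n$ neighbors $j$ of $i$, and for each such $j$ one updates $\|a^{(t)}_{.,j}\|_1$ or $\|a^{(t)}_{j,.}\|_1$ in $O(1)$, so a step is $O(n)$ arithmetic operations after the initial $O(m)$ setup---wait, that would give $O(\eps^{-2}n^3\log w)$, better than claimed only if $m=O(n^2)$. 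To match the stated $O(\eps^{-2}mn\log w)$ I would instead bound the per-step cost by $O(\deg(i_t))$ and charge a full sweep by $\sum_i \deg(i) = O(m)$, so each of the $O(\eps^{-2}n\log w)$ sweeps costs $O(m)$ plus $O(n)$ bookkeeping, i.e.\ $O(m)$, for a total of $O(\eps^{-2}mn\log w)$. The bit-complexity bound $O(n\log w)$ follows because after $K$ steps each $x^{(t)}_i$ is a sum of $O(K/n)$ logarithms of ratios of partial sums, so $e^{x_i^{(t)}}$ is a ratio of products of at most $\mathrm{poly}$ many integers each of size $\le \sum a_{ij}/a_{\min} = w$, hence representable in $O(n\log w)$ bits---this is the same precision bound asserted for the greedy variant and I would argue it identically.

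The main obstacle is the middle step: controlling $\bigl|\,\|a^{(t_1)}_{.,i}\|_1-\|a^{(t_1)}_{i,.}\|_1\bigr|$ at the moment round-robin actually balances $i$, given that up to $n-1$ other balancing operations have intervened. The clean dichotomy above (either $f$ already dropped enough, or the intervening scalings were all tiny so $i$'s imbalance is essentially preserved) is the crux; getting the constants to line up---in particular making "tiny intervening scalings" rigorous via the telescoped $f$-decrease rather than hand-waving---is where the real work lies. Everything else is the same template as Theorem~\ref{thm:greedy}.
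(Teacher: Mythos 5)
Your overall plan coincides with the paper's: analyze one full sweep, and run a dichotomy --- either the balancings preceding index $i$'s turn already dropped $f$ by $\Omega(\eps^2/n)\,f(\xt{t_0})$, or $i$'s imbalance is essentially intact when round-robin reaches it. The gap is in the step you yourself flag as the crux: the inference ``per-step drop at most $\tfrac{\eps^2}{8n}f(\xt{t_0})$ $\Rightarrow$ the scaling factor applied at that step is within $1\pm O(\eps/\sqrt n)$ of $1$'' is false. Balancing an index $j$ drops $f$ by $\alo{j}\bigl(\sqrt{\ali{j}/\alo{j}}-1\bigr)^2$, so if both $\alo{j}$ and $\ali{j}$ are tiny relative to $f$ (a nearly isolated node), the scaling factor $\sqrt{\ali{j}/\alo{j}}$ can be arbitrarily large while the drop is negligible. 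Hence you cannot conclude that $\ali{i}$ and $\alo{i}$ move only by a $(1\pm o(1))$ multiplicative factor before $i$ is balanced; moreover a multiplicative statement is not the right currency anyway, since what must survive is the additive imbalance $\bigl|\ali{i}-\alo{i}\bigr|\ge \tfrac{\eps}{\sqrt n}f(\xt{t_0})$.

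What a small drop does give is an additive bound on the change of each arc incident on $i$: if balancing $j$ changes $a_{ji}$ by $\delta$, that step drops $f$ by at least $\delta^2/a_{ji}$, because the drop equals $(\delta/a_{ji})^2\,\alo{j}$ and $\alo{j}\ge a_{ji}$ (this is exactly the paper's Claim~\ref{claim:neighbors}). Summing over the at most $n$ incident arcs that change before $i$'s turn (each changes at most once per sweep) and applying Cauchy--Schwarz, a total in-sweep drop below $\tfrac{\eps^2}{cn}f(\xt{t_0})$ forces $\sum_j\delta_j+\sum_j\sigma_j$ to be a small fraction of $\bigl|\ali{i}-\alo{i}\bigr|$, and then balancing $i$ itself drops $f$ by $\Omega(\eps^2/n)f(\xt{t_0})$; the factor $n$ lost to Cauchy--Schwarz is precisely where the per-sweep $1/n$ comes from, matching the paper's two-case analysis. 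With this substitution (additive arc-change control via Claim~\ref{claim:neighbors} in place of ``scaling factors near $1$'') your argument becomes the paper's proof; your remaining ingredients --- choosing $i$ from a large gradient coordinate (slightly weaker than the paper's Corollary~\ref{cor:progress} but sufficient), the cycle-product lower bound $f\ge a_{\min}$, and the $O(m)$-per-sweep operation count --- agree with the paper.
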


\section{Randomized Balancing}\label{randomized}

In Theorem~\ref{thm:greedy} the arithmetic operations were applied to $O(n\ln w)$-bit
numbers. This will cause an additional factor of $O(n\ln w)$ in the running time of the
algorithm. In this section we fix this issue by presenting a randomized variant of the 
algorithm that applies arithmetic operations to numbers of $O(\ln(wn/\eps))$ bits. Thus, 
we obtain a algorithm for balancing that runs in nearly linear time. While the greedy 
algorithm works by picking the node $i$ that maximizes 
$(\sqrt{\|a_{i,.}\|}-\sqrt{\|a_{.,i}\|})^2$, the key idea of the randomized algorithm is 
sampling a node for balancing using sampling probabilities that do not depend on 
the difference in arc weights (the algorithm uses low-precision rounded weights, so 
this can affect significantly the difference). Instead, our sampling probabilities depend 
on the sum of weights of the arcs incident on a node.

We first introduce some notation. We use $O(\ln(wn/\eps))$ bits of precision to 
approximate $x_i$-s with $\hx_i$-s. 
Thus, $x_i - 2^{-O(\ln(nw/\eps))} \le \hx_i \le x_i$. In addition to maintaining 
$\hxt{t} = (\hx_1^{(t)}, \hx_2^{(t)}, \ldots, \hx_n^{(t)})$ at every time $t$. The 
algorithm also maintains for every $i$ and $j$ the value of $\haij$ which is 
$a_{ij}^{(t)} = a_{ij}e^{\hx_i^{(t)}- \hx_j^{(t)}}$ truncated to $O(\ln(wn/\eps))$ 
bits of precision. We set the hidden constant to give a truncation error of
$r = (\eps/wn)^{10} a_{\min}$, so $a_{ij}^{(t)} - r \le \haij \le a_{ij}^{(t)}$. The 
algorithm also maintains for every $i$, $\hao = \sum_{j=1}^n \haij$ and 
$\hai = \sum_{j=1}^n\haji$. For every $i$, we use the notation 
$\|a_{i,.}^{(t)}\| = \sum_{j=1}^n a_{ij}^{(t)}$ and 
$\|a_{.,i}\| = \sum_{j=1}^n a_{ji}^{(t)}$. Note that the algorithm does not 
maintain the values $a_{ij}^{(t)}$, $\|a_{.,i}^{(t)}\|$ or $\|a_{i,.}^{(t)}\|$. 

The algorithm works as follows (see the pseudo-code of Algorithm~\ref{alg:random}
that appears in Section~\ref{proofs}). 
In each iteration it samples an index $i$ with probability 
$p_i = \frac{\hao + \hai}{2\sum_{i,j}\haij}$. If $i$ is sampled, a balancing operation 
is applied to index $i$ only if the arcs incident on $i$ have significant weight,
and $i$'s imbalance is sufficiently large. Put $\hat{M}_i = \max\{\hao, \hai\}$ 
and put $\hat{m}_i = \min\{\hao, \hai\}$.
The imbalance is considered large 
if $\hat{m}_i=0$ (this can happen because of the low precision), or 
if $\hat{m}_i\neq 0$ and 
$\frac{\hat{M}_i}{\hat{m}_i} \ge 1 + \frac{\eps}{n}$.
A balancing operation is done by adding $\alpha$ to $\hxt{t}_i$,
where $\alpha = \frac{1}{2}\ln(\hai/\hao)$, unless $\hat{m}_i=0$,
in which case we replace the $0$ value by $nr$.
This updates the weights of the arcs incident on $i$. Also, the $L_1$ norms 
of changed rows and columns are updated. (For convenience we use 
in this section $\|\cdot\|$ instead of $\|\cdot\|_1$ to denote the $L_1$ norm.)

Note that in the pseudo-code, $\gets$ indicates an assignment where
the value on the right-hand side is computed to $O(\ln(wn/\eps))$ bits 
of precision. Thus, we have
\begin{equation}\label{eq28}
\alpha - (\eps/wn)^{10} \le \widehat{x}^{(t+1)}_i - \widehat{x}^{(t)}_i \le \alpha
\end{equation}
and
$$
a_{ij}e^{\hx_i^{(t+1)}- \hx_j^{(t+1)}}-r\le\widehat{a}_{ij}^{(t+1)}\le a_{ij}e^{\hx_i^{(t+1)}- \hx_j^{(t+1)}},
$$
$$
a_{ji}e^{\hx_i^{(t+1)}- \hx_j^{(t+1)}}-r\le\widehat{a}_{ji}^{(t+1)}\le a_{ji}e^{\hx_i^{(t+1)}- \hx_j^{(t+1)}}.
$$
\begin{theorem}\label{thm:random}
With probability at least $\frac{9}{10}$, Algorithm~\ref{alg:random} returns 
in time $O(m\ln\sum_{ij}a_{ij} + \eps^{-2} n\ln(wn/\eps)\ln w)$
an $\eps$-balanced matrix.
\end{theorem}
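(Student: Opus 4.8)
The plan is to follow the template of a randomized coordinate-descent analysis: (1) establish a per-step multiplicative decrease, in expectation, of the quantity $\tilde f^{(t)}=\sum_{ij}\haij$ that the algorithm actually tracks; (2) turn this into a high-probability bound of $O(\eps^{-2}\ln w)$ on the number of iterations via a stopped supermartingale and Markov's inequality; and (3) account for the per-iteration cost. The point of the imbalance-independent sampling probabilities $p_i$ is that they are cheap and robust to low precision, and the recurring technical burden --- the main obstacle --- is to show that working with $O(\ln(wn/\eps))$-bit numbers injects only additive errors of order $\mathrm{poly}(\eps/wn)\cdot a_{\min}$ per step, negligible both against $\sum_{ij}a_{ij}$ and against $a_{\min}$, so that the clean argument below survives the passage to the low-precision iterate.

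\textbf{Step 1: expected progress.} Fix the state at time $t$. The sampled index $i$ has probability $p_i=(\hao+\hai)/(2\tilde f^{(t)})$, and if it clears the imbalance/weight thresholds it is balanced. By Lemma~\ref{lemma:reduction} applied to the rounded weights, together with~\eqref{eq28}, balancing such an $i$ decreases $\tilde f^{(t)}$ by at least $(\sqrt{\hat M_i}-\sqrt{\hat m_i})^2-\delta$ with $\delta=O(\mathrm{poly}(\eps/wn)\cdot a_{\min})$ (the degenerate rule $\hat m_i=0\mapsto nr$ is handled separately and gives a decrease of order $\hat M_i$ there). Since $(\sqrt{\hat M_i}-\sqrt{\hat m_i})^2\ge(\hat M_i-\hat m_i)^2/\bigl(2(\hat M_i+\hat m_i)\bigr)$ and $\hat M_i+\hat m_i=\hao+\hai=2p_i\tilde f^{(t)}$, taking expectation over the sampled index yields
\[
\E\left[\tilde f^{(t)}-\tilde f^{(t+1)}\mid\mathcal F_t\right]\ \ge\ \frac{1}{4\tilde f^{(t)}}\sum_{i\ \mathrm{cleared}}\bigl(\hao-\hai\bigr)^2\ -\ \delta .
\]
An index that is \emph{not} cleared has $\hat M_i/\hat m_i<1+\eps/n$, hence $|\hao-\hai|\le(\eps/n)(\hao+\hai)$, so such indices contribute at most $(\eps/n)^2\bigl(\sum_i(\hao+\hai)\bigr)^2=4\eps^2(\tilde f^{(t)})^2/n^2$ to $\sum_i(\hao-\hai)^2$. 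Up to lower-order truncation error $\sum_i(\hao-\hai)^2=\|\nabla f(\hxt{t})\|_2^2$, so whenever the current matrix is not $\eps$-balanced ($\|\nabla f\|_2>\eps f$ by~\eqref{eq:stop condition}, and $f(\hxt{t})\ge\tilde f^{(t)}$) the cleared indices carry a $(1-O(1/n^2))$ fraction of $\sum_i(\hao-\hai)^2$, giving the key recurrence: for a universal constant $c>0$,
\[
A^{(t)}\text{ not }\eps\text{-balanced}\ \Longrightarrow\ \E\left[\tilde f^{(t+1)}\mid\mathcal F_t\right]\ \le\ (1-c\eps^2)\,\tilde f^{(t)}+\delta .
\]
This is exactly where the threshold $\eps/n$ is calibrated: if no index clears it then $\|\nabla f\|_2/f\le O(\eps/n)\le\eps$ and the matrix is already $\eps$-balanced, so it is safe to do nothing.

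\textbf{Step 2: high-probability iteration bound.} First, a deterministic floor $\tilde f^{(t)}\ge a_{\min}/2$ at every step: from $\haij\ge a_{ij}e^{\hx_i^{(t)}-\hx_j^{(t)}}-r$ we get $\tilde f^{(t)}\ge f(\hxt{t})-n^2r$, and since $\hxt{t}$ is an honest point of $\RR^n$ and $G_A$ is strongly connected, applying the arithmetic--geometric mean inequality to the arc weights around any directed cycle $C$ gives
\[
f(\hxt{t})\ \ge\ |C|\left(\prod_{(i,j)\in C}a_{ij}e^{\hx_i^{(t)}-\hx_j^{(t)}}\right)^{1/|C|}\ =\ |C|\left(\prod_{(i,j)\in C}a_{ij}\right)^{1/|C|}\ \ge\ a_{\min},
\]
the exponents telescoping to $0$ around the cycle; hence $\tilde f^{(t)}\ge a_{\min}-n^2r\ge a_{\min}/2$. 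Let $\tau$ be the first iteration at which $A^{(t)}$ is $\eps$-balanced, and put $M_t=\tilde f^{(t\wedge\tau)}\,(1-c\eps^2/2)^{-(t\wedge\tau)}$. By Step~1, with the additive $\delta\ll\eps^2a_{\min}$ absorbed by degrading the rate from $c\eps^2$ to $c\eps^2/2$ using the floor, $(M_t)$ is a supermartingale, so $\E[M_t]\le M_0=\sum_{ij}a_{ij}$. On $\{\tau>t\}$ we have $M_t\ge(a_{\min}/2)(1-c\eps^2/2)^{-t}$, so Markov's inequality gives
\[
\Pr[\tau>t]\ \le\ \frac{2\sum_{ij}a_{ij}}{a_{\min}}\,(1-c\eps^2/2)^t\ =\ 2w\,(1-c\eps^2/2)^t,
\]
at most $1/10$ once $t=\Theta(\eps^{-2}\ln w)$. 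If the algorithm returns the current matrix as soon as~\eqref{eq:stop condition} holds (an $O(n)$-time check per iteration) and in any case halts after $K=\Theta(\eps^{-2}\ln w)$ iterations, it returns an $\eps$-balanced matrix with probability at least $9/10$.

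\textbf{Step 3: running time, and the main obstacle.} Reading $A$ and forming the initial $\haij=a_{ij}$ (since $\hxt{1}=\mathbf{0}$) and sums $\hao,\hai$ costs $O(m\ln\sum_{ij}a_{ij})$ bit operations; thereafter every stored number has $O(\ln(wn/\eps))$ bits, using that $|\hx_i^{(t)}-\hx_j^{(t)}|\le n\ln w$ --- a consequence of $f(\hxt{t})$ staying between roughly $a_{\min}$ and $\sum_{ij}a_{ij}$ together with strong connectivity, after a harmless normalization keeping $\min_i\hx_i^{(t)}=0$ --- to control the magnitudes of the exponents, hence of the arc weights down to the truncation scale $r=(\eps/wn)^{10}a_{\min}$. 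Each iteration samples an index in $O(n)$ time from the array of $\hao+\hai$, and, when it balances, updates the at most $n$ incident arc weights and the $O(n)$ affected row/column sums in $O(n)$ arithmetic operations on $O(\ln(wn/\eps))$-bit numbers; over $K=\Theta(\eps^{-2}\ln w)$ iterations this is $O(\eps^{-2}n\ln(wn/\eps)\ln w)$ bit operations, which with the initialization gives the stated bound. The crux is the bookkeeping deferred above: one must prove the quantitative truncation estimates --- that the per-step additive slack is $O(\mathrm{poly}(\eps/wn)\cdot a_{\min})$, small enough to preserve both the recurrence of Step~1 and the floor $\tilde f^{(t)}\ge a_{\min}/2$, and that the threshold and ``significant weight'' tests never stall the iteration --- so that the argument above, written for exact arithmetic, goes through for the $O(\ln(wn/\eps))$-bit iterate.
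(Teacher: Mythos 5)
Your skeleton matches the paper's proof: sample $i$ with probability proportional to $\hao+\hai$, show the expected one-step decrease of the objective is $\Omega(\eps^2)$ times its current value whenever the matrix is not $\eps$-balanced (with the skipped indices contributing only $O(\eps^2/n)$ of the objective), combine the geometric decrease with the ceiling $\sum_{ij}a_{ij}$ and the cycle/AM--GM floor of order $a_{\min}$ to get $O(\eps^{-2}\ln w)$ iterations, finish with Markov, and charge $O(n)$ operations on $O(\ln(wn/\eps))$-bit numbers per iteration plus $O(m\ln\sum_{ij}a_{ij})$ for initialization. Your stopped-supermartingale packaging of the tail bound is, if anything, a cleaner formalization than the paper's one-line ``expected number of iterations plus Markov.''

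The problem is that you defer exactly the part that constitutes the paper's proof of this theorem: the quantitative low-precision analysis, i.e.\ the analogues of Lemma~\ref{lemma:case} and Lemma~\ref{lemma:decrease}. Your Step~1 asserts that balancing a cleared index decreases the tracked sum by at least $(\sqrt{\hat M_i}-\sqrt{\hat m_i})^2-\delta$ with $\delta=O(\mathrm{poly}(\eps/wn)\,a_{\min})$, ``by Lemma~\ref{lemma:reduction} applied to the rounded weights together with~\eqref{eq28}.'' But the algorithm does not balance the rounded matrix: it scales by $e^{\alpha}$ with $\alpha$ computed from $\hat M_i,\hat m_i$, applied to the true weights $a_{ij}e^{\hx_i^{(t)}-\hx_j^{(t)}}$ and re-truncated, so the loss relative to exact balancing is governed by the mismatch $\sigma=\frac{\hat M_i/\hat m_i}{M_i/m_i}$ and is not additive in any obvious sense: when truncation drives $\hat m_i$ far below $m_i$ (or to $0$), the computed $\alpha$ overshoots and the step could a priori fail to decrease the objective by the claimed amount. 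The paper's Lemma~\ref{lemma:decrease} is precisely the case analysis ($\sigma$ near $1$, $\sigma<1$, $\sigma$ large, and $\hat m_i=0$) that rules this out, using $\hat m_i\ge r$ whenever $\hat m_i\neq 0$ and the weight threshold defining the set $A$ to show that whenever $\sigma$ can be as large as $2n$ the true ratio $M_i/m_i$ is enormous, so the decrease is still a constant fraction of $(\sqrt{M_i}-\sqrt{m_i})^2$. Likewise, Lemma~\ref{lemma:case} is needed for the skipped indices, including those failing the weight test, for which your statement ``an index that is not cleared has $\hat M_i/\hat m_i<1+\eps/n$'' is false (such indices can be arbitrarily unbalanced in ratio and are instead controlled through $\hao+\hai<\eps a_{\min}/10wn$). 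Since the theorem is specifically about the $O(\ln(wn/\eps))$-bit variant, these estimates are the heart of the matter; as written, your proposal is a correct plan for the exact-arithmetic argument plus an acknowledged IOU for the part that distinguishes this theorem from Theorem~\ref{thm:greedy}, so it has a genuine gap.
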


The idea of proof is to show that in every iteration of the algorithm we
reduce $f(.)$ by at least a factor of $1-\Omega(\eps^2)$. Before we
prove the theorem, we state and prove a couple of useful lemmas.

Fix an iteration $t$, and define three sets of indices as follows:
$A=\{i:\hao + \hai \ge \eps a_{\min}/10wn\}$, 
$B = \{i: \hat{m}_i \ne 0\wedge \hat{M}_i/\hat{m}_i \ge 1+ {\eps}/{n}\}$, 
and $C=\{i: \hat{m}_i = 0\}$. If the random index $i$ satisfies 
$i\notin A$ or $i\in A\setminus({B\cup C})$, the algorithm does not 
perform any balancing operation on $i$. The following lemma states 
that the expected decrease due to balancing such indices is small, 
and thus skipping them does not affect the speed of convergence
substantially.
\begin{lemma}\label{lemma:case}
For every iteration $t$,
$\sum_{i\notin A\cap (B\cup C)} p_i\cdot
\left(\sqrt{\|a_{i,.}^{(t)}\|}-\sqrt{\|a_{.,i}^{(t)}\|}\right)^2 < 
\frac{2\eps^2}{n}\cdot f(\xt{t})$, where $p$ is the probability
distribution over indices at time $t$.
\end{lemma}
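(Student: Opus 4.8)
My plan is to split the index set $\{i:\ i\notin A\cap(B\cup C)\}$ into the two pieces $\{i\notin A\}$ and $\{i\in A\setminus(B\cup C)\}$ and bound the contribution of each separately, showing the first is at most $\tfrac{\eps^2}{50n}f(\xt{t})$ and the second at most $\tfrac{2\eps^2}{n^2}f(\xt{t})$; since $\tfrac1{50}+\tfrac2n<2$ for $n\ge2$ (the case $n=1$ is vacuous), the sum is below $\tfrac{2\eps^2}{n}f(\xt{t})$. I would first record three facts, valid because $n\ge2$, $\eps\le1$, $w\ge1$, and $r=(\eps/wn)^{10}a_{\min}$. First, since diagonal scaling preserves the product of arc weights on every directed cycle of $G_A$ (and the change of variables to $\hxt{t}$ does not affect this), $f(\xt{t})=\sum_{i,j}a^{(t)}_{ij}\ge a_{\min}$, where $a^{(t)}_{ij}=a_{ij}e^{\hx_i^{(t)}-\hx_j^{(t)}}$. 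Second, from $a^{(t)}_{ij}-r\le\haij\le a^{(t)}_{ij}$ over the $\le n^2$ nonzero entries, $\sum_{i,j}\haij\ge f(\xt{t})-n^2r\ge\tfrac12 f(\xt{t})$, and for every $i$ the hatted norms $\hao,\hai$ underestimate the true norms $\|a^{(t)}_{i,.}\|,\|a^{(t)}_{.,i}\|$ by at most $nr$. Third, writing $M_i,\mu_i$ for the larger and smaller of $\|a^{(t)}_{i,.}\|,\|a^{(t)}_{.,i}\|$ and $s_i=M_i+\mu_i$, the second fact gives $M_i\le\hat{M}_i+nr$ and $\mu_i\ge\hat{m}_i$, while $\sum_i s_i=2f(\xt{t})$.

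\emph{Indices $i\notin A$.} Here $\hao+\hai<\eps a_{\min}/(10wn)$. Since $2nr$ is a fixed polynomial factor smaller than this bound, the second fact gives $s_i<\eps a_{\min}/(5wn)$, while $p_i=\tfrac{\hao+\hai}{2\sum_{i,j}\haij}\le\tfrac{\eps a_{\min}}{20wn\sum_{i,j}\haij}$. Using $(\sqrt{x}-\sqrt{y})^2\le x+y$, the $i$-th term is at most $p_i s_i\le\tfrac{\eps^2 a_{\min}^2}{100w^2n^2\sum_{i,j}\haij}$; summing over the $\le n$ such indices and invoking $\sum_{i,j}\haij\ge\tfrac12 f(\xt{t})\ge\tfrac12 a_{\min}$ bounds this part by $\tfrac{\eps^2 a_{\min}}{50w^2n}\le\tfrac{\eps^2}{50n}f(\xt{t})$.

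\emph{Indices $i\in A\setminus(B\cup C)$.} Now $\hat{m}_i\ne0$ and $\hat{M}_i/\hat{m}_i<1+\eps/n$, and $i\in A$ forces $\hat{M}_i\ge\tfrac12(\hao+\hai)\ge\eps a_{\min}/(20wn)$, hence $\hat{m}_i>\hat{M}_i/(1+\eps/n)\ge\eps a_{\min}/(40wn)$, so $\mu_i\ge\hat{m}_i$ dominates $nr$. Combined with the third fact, $M_i-\mu_i\le(\hat{M}_i-\hat{m}_i)+nr<\tfrac{\eps}{n}\hat{m}_i+nr\le\tfrac{2\eps}{n}\mu_i$. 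Since $(\sqrt{M_i}-\sqrt{\mu_i})^2=\tfrac{(M_i-\mu_i)^2}{(\sqrt{M_i}+\sqrt{\mu_i})^2}\le\tfrac{(M_i-\mu_i)^2}{4\mu_i}\le\tfrac{\eps^2}{n^2}\mu_i$, this part is at most $\tfrac{\eps^2}{n^2}\sum_i p_i\mu_i\le\tfrac{\eps^2}{n^2}\sum_i\mu_i\le\tfrac{\eps^2}{n^2}\sum_i s_i=\tfrac{2\eps^2}{n^2}f(\xt{t})$, using $p_i\le1$. Adding the two bounds finishes the lemma.

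\emph{Where the work is.} Neither case analysis is deep; the real content is the bookkeeping of the truncation error $r$. One has to verify that $r$ — equivalently $nr$ and $n^2r$ — is polynomially dominated by every quantity it is compared against: $\eps a_{\min}/(wn)$ in the first case, $\hat{m}_i$ in the second (which first required extracting $\hat{m}_i\ge\eps a_{\min}/(40wn)$ from $i\in A$ together with $i\notin B$), and $f(\xt{t})\ge a_{\min}$. Each such check reduces to an inequality like $20\eps^9\le w^9n^8$, which holds with large slack for $n\ge2$, $\eps\le1$, $w\ge1$, and this slack is exactly why the exponent $10$ was built into $r$. The only other subtlety is that the larger of $\|a^{(t)}_{i,.}\|,\|a^{(t)}_{.,i}\|$ need not correspond to the larger of $\hao,\hai$; but the second fact is coordinate-wise, so $M_i\le\hat{M}_i+nr$ and $\mu_i\ge\hat{m}_i$ hold regardless, and the second case goes through unchanged.
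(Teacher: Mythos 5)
Your proof is correct and takes essentially the same route as the paper's: the same split into $i\notin A$ and $i\in A\setminus(B\cup C)$, the same truncation-error bookkeeping (in particular the lower bound $\hat{m}_i\ge \eps a_{\min}/40wn$ for $i\in A\setminus(B\cup C)$ and the resulting ratio bound $M_i/\mu_i\le 1+2\eps/n$), and the same use of $\sum_{i,j}\haij\approx f(\xt{t})\ge a_{\min}$. The only differences are cosmetic—bounding each term and summing rather than taking the worst index times $n$, and slightly different constants—so the argument matches the paper's proof in substance.
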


We now show a lower bound on the decrease in $f(\cdot)$, if a node 
$i\in A\cap(B\cup C)$ is balanced.
\begin{lemma}\label{lemma:decrease}
If $i\in A\cap (B \cup C)$ is balanced in iteration $t$, then 
$f(\hxt{t+1}) - f(\hxt{t}) \ge 
\frac{1}{10}\cdot\left(\sqrt{\|a_{i,.}^{(t)}\|}-\sqrt{\|a_{.,i}^{(t)}\|}\right)^2$.
\end{lemma}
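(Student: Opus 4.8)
The plan is to compare the true decrease in $f$ (given by Lemma~\ref{lemma:reduction} applied to the ideal balancing step) with the actual decrease produced by the low-precision update in Algorithm~\ref{alg:random}, and show that the truncation errors are negligible precisely because $i\in A\cap(B\cup C)$ guarantees that the arcs incident on $i$ carry significant weight and the imbalance is not too small. The quantity $\left(\sqrt{\|a_{i,.}^{(t)}\|}-\sqrt{\|a_{.,i}^{(t)}\|}\right)^2$ is the ``ideal'' gain; the factor $\frac{1}{10}$ (rather than $1$) is the slack we can afford to lose to rounding.

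\textbf{Step 1 (the ideal step).} First I would recall from Lemma~\ref{lemma:reduction} that an \emph{exact} balancing of index $i$ on a matrix with row/column sums $\|a_{i,.}^{(t)}\|,\|a_{.,i}^{(t)}\|$ decreases the sum of entries by exactly $\left(\sqrt{\|a_{i,.}^{(t)}\|}-\sqrt{\|a_{.,i}^{(t)}\|}\right)^2$. The point of this lemma is that the algorithm's step, while not exact, is close enough.

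\textbf{Step 2 (bounding the rounding loss).} I would track three sources of error: (a) the stored weights $\haij$ are truncated versions of the true weights $a_{ij}^{(t)}$, with additive error at most $r=(\eps/wn)^{10}a_{\min}$ per entry; (b) the scaling factor $\alpha=\frac12\ln(\hai/\hao)$ is computed from these truncated norms rather than the true ones, and $\alpha$ itself is stored to finite precision (Equation~\eqref{eq28}); and (c) in the case $i\in C$ (i.e.\ $\hat m_i=0$) the algorithm substitutes $nr$ for the zero norm. Summing over the at most $n$ arcs incident on $i$, the \emph{total} discrepancy between the true decrease in $f(\cdot)$ and the quantity $\left(\sqrt{\|a_{i,.}^{(t)}\|}-\sqrt{\|a_{.,i}^{(t)}\|}\right)^2$ is at most $O(nr)=O\big(n(\eps/wn)^{10}a_{\min}\big)$, plus a multiplicative error of order $(\eps/wn)^{10}$ coming from Equation~\eqref{eq28}. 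Meanwhile, because $i\in A$ we have $\hao+\hai\ge \eps a_{\min}/10wn$, and because $i\in B\cup C$ the relative imbalance is either undefined-due-to-a-true-$0$ (so the ideal gain is of order $\hat M_i\ge \eps a_{\min}/20wn$) or satisfies $\hat M_i/\hat m_i\ge 1+\eps/n$, which forces $\left(\sqrt{\hao}-\sqrt{\hai}\right)^2\ge \Omega\big(\eps^2/n^2\big)\cdot(\hao+\hai)\ge \Omega\big(\eps^3 a_{\min}/wn^3\big)$. In either case the ideal gain dominates the absolute rounding error by a huge polynomial factor in $wn/\eps$ (this is exactly why the exponent $10$ in $r$ was chosen), so the actual decrease is at least, say, $\frac12$ of the ideal gain; converting between $\|\cdot\|$-norms of the true matrix and the stored $\|\widehat a\|$-norms costs another constant factor, bringing us to $\frac1{10}$.

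\textbf{Main obstacle.} The delicate point is case (c), $\hat m_i=0$: here the ``true'' minimum norm $\|a_{i,.}^{(t)}\|$ or $\|a_{.,i}^{(t)}\|$ may be genuinely tiny (though positive, since $G_A$ is strongly connected all arcs survive), so I cannot directly compare with an exact balancing of the true norms. Instead I would argue that replacing the stored $0$ by $nr$ and balancing against that value still reduces $f$ by at least a constant fraction of $\hat M_i\ge \eps a_{\min}/20wn$, using that the post-step column (or row) sum is at most $\sqrt{nr\cdot\hat M_i}+O(nr)$, which is far below $\hat M_i$; simultaneously I must check this step does not \emph{increase} the true $f$ by more than a lower-order term, i.e.\ that pushing $\hx_i$ by $\alpha=\frac12\ln(\hat M_i/nr)$ when the true small norm is somewhere in $[0,nr]$ cannot overshoot badly — which again follows from the $O(nr)$ truncation guarantee and the definition of $r$. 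Assembling these estimates with Lemma~\ref{lemma:reduction} and the membership conditions then yields the stated bound.
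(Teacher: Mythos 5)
Your overall plan (compare the low-precision step to the exact balancing of Lemma~\ref{lemma:reduction} and charge the difference to the truncation) is the right frame, and your treatment of the $\hat{m}_i=0$ case matches the paper's. But Step~2 contains a genuine gap: the claim that the discrepancy between the actual decrease and the ideal gain is at most an additive $O(nr)$ plus a multiplicative $(\eps/wn)^{10}$ factor is false in the regime where the true smaller norm $m_i=\min\{\|a_{i,.}^{(t)}\|,\|a_{.,i}^{(t)}\|\}$ is positive but comparable to $nr$ (so $i\in B$, not $C$). There the truncated norm $\hat{m}_i$ can be as small as $r$ while $m_i\approx nr$, so the computed ratio $\hat{M}_i/\hat{m}_i$ overestimates the true ratio $M_i/m_i$ by a factor $\sigma$ as large as order $n$; the applied scaling then overshoots, and the shortfall relative to the ideal gain is of order $\left(\sqrt{\sigma}+1/\sqrt{\sigma}-2\right)\sqrt{M_i m_i}$, roughly $\sqrt{n M_i\cdot nr}$, which exceeds $O(nr)$ by a factor $\sqrt{M_i/r}$ and also exceeds $(\eps/wn)^{10}M_i$. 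So the error is neither absolutely small nor a tiny fraction of the gain in the sense you assert, and your "main obstacle" paragraph, which only addresses $\hat{m}_i=0$, does not cover this case.

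The conclusion still holds in that regime, but for a different reason, and this is exactly what the paper's proof supplies: it introduces $\sigma=\frac{\hat{M}_i/\hat{m}_i}{M_i/m_i}$, writes the decrease as $\left(\sqrt{M_i}-\sqrt{m_i}\right)^2-\left((1+\delta)/\sqrt{\sigma}+(1+\delta)\sqrt{\sigma}-2\right)\sqrt{M_i m_i}$, and handles three cases ($\sigma$ slightly above $1$, $\sigma<1$, and $\sigma>1+\eps^4/n^2$). In the problematic case the argument is relative, not absolute: $\sigma$ large forces $m_i=O(n^3 r/\eps^4)$ while $M_i\ge\eps a_{\min}/20wn$, so $M_i/m_i$ is enormous, $\sqrt{\sigma}\le\sqrt{2n}$, and hence the loss term $\sqrt{\sigma M_i m_i}$ is a vanishing fraction of the gain, which is essentially $M_i$. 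Your proposal needs this ratio-based case analysis (or an equivalent) to be a complete proof; as written, the quantitative error bound it relies on would fail.
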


\begin{proof}[Proof of Theorem~\ref{thm:random}]
By Lemma~\ref{lemma:decrease}, the expected decrease in 
$f(.)$ in iteration $t$ is lower bounded as follows.
\begin{eqnarray*}
                        \mathbb{E}[f(\hxt{t}) - f(\hxt{t+1})] 
&\ge& \sum_{i\in A\cap(B\cup C)} p_i\cdot \frac{1}{10} \left(\sqrt{\|a_{i,.}^{(t)}\|}-
                \sqrt{\|a_{.,i}^{(t)}\|}\right)^2 \\
& = & \frac{1}{10}\cdot \left(\sum_{i = 1}^n p_i\cdot 
        \left(\sqrt{\|a_{i,.}^{(t)}\|}-\sqrt{\|a_{.,i}^{(t)}\|}\right)^2  - 
        \sum_{i\notin A\cap (B\cup C)} p_i\cdot\left(\sqrt{\|a_{i,.}^{(t)}\|}-
          \sqrt{\|a_{.,i}^{(t)}\|}\right)^2 \right).
\end{eqnarray*}
The second term can be bounded, using Lemma~\ref{lemma:case}, by
$\sum_{i\notin A\cap (B\cup C)}\left(\sqrt{\|a_{i,.}^{(t)}\|}-
\sqrt{\|a_{.,i}^{(t)}\|}\right)^2 \le \frac{2\eps^2}{n}\cdot f(\hxt{t})$.
For the first term, we can write 
\begin{eqnarray*}
                 \sum_{i = 1}^n p_i\cdot \left(\sqrt{\|a_{i,.}^{(t)}\|}-\sqrt{\|a_{.,i}^{(t)}\|}\right)^2 
&\ge& \sum_{i = 1}^n p_i \cdot\frac{(\|a_{i,.}^{(t)}\| - \|a_{.,i}^{(t)}\|)^2}
                            {2(\|a_{i,.}^{(t)}\| + \| a_{.,i}^{(t)}\|)} \\
& = &  \sum_{i=1}^n \frac{\hao + \hai}{2\sum_{ij}\haij} \cdot 
             \frac{(\|a_{i,.}^{(t)}\| - \|a_{.,i}^{(t)}\|)^2}{2(\|a_{i,.}^{(t)}\| + \|a_{.,i}^{(t)}\|)} \\
&\ge& \frac{1}{16}\cdot\sum_{i=1}^n 
                  \frac{(\|a_{i,.}^{(t)}\| - \|a_{.,i}^{(t)}\|)^2}{\sum_{ij}a_{ij}^{(t)}} \\
& = &  \frac{\|\nabla f(\hxt{t})\|_2^2}{16f(\hxt{t})}\ge 
           \frac{\eps^2}{16}\cdot f(\hxt{t}),
\end{eqnarray*}
where the penultimate inequality holds because $\frac{\hat{M}_i}{M_i} \ge \frac{1}{2}$,
so $\frac{\hao + \hai}{\|a_{i,.}^{(t)}\| + \|a_{.,i}^{(t)}\|} \ge \frac{\hat{M}_i}{2M_i} \ge \frac{1}{4}$,
and the last inequality holds as long as the matrix is not $\eps$-balanced,
so $\frac{\|\nabla f(\hxt{t})\|_2}{f(\hxt{t})}\ge \eps$. 
Combining everything together, we get
\begin{eqnarray*}
                       \mathbb{E}[f(\hxt{t}) - f(\hxt{t+1})]
&\ge& \frac{1}{10}\cdot \left(\sum_{i = 1}^n p_i\cdot 
                 \left(\sqrt{\|a_{i,.}^{(t)}\|}-\sqrt{\|a_{.,i}^{(t)}\|}\right)^2  - 
                 \sum_{i\notin A\cap (B\cup C)}p_i\cdot
                 \left(\sqrt{\|a_{i,.}^{(t)}\|}-\sqrt{\|a_{.,i}^{(t)}\|}\right)^2 \right) \\
&\ge&  \frac{1}{10}\cdot \left(\frac{\eps^2}{16}\cdot f(\hxt{t}) - 
                \frac{2\eps^2}{n}\cdot f(\hxt{(t)})\right) \ge \frac{\eps^2}{320}\cdot f(\hxt{t}),
\end{eqnarray*}
where the last inequality assumes $n \ge 64$. 
This implies that the expected number of iterations to obtain an 
$\eps$-balanced matrix is $O(\eps^{-2}\ln w)$. Markov's inequality
implies that with probability $\frac{9}{10}$ an $\eps$-balanced matrix 
is obtained in $O(\eps^{-2}\ln w)$ iterations. It is easy to see that each 
iteration of the algorithm takes $O(n\ln (wn/\eps))$ time. Initializations 
take $O(m\ln\sum_{ij}a_{ij})$ time. So the total running time of the algorithm 
is $O(m\ln\sum_{ij}a_{ij} + \eps^{-2} n\ln(wn/\eps)\ln w)$.
\end{proof}



\section{A Lower Bound on the Rate of Convergence}\label{lower bound}


In this section we prove the following lower bound.
\begin{theorem}\label{thm: lower main}
There are matrices for which
all variants of the Osborne-Parlett-Reinsch iteration
(i.e., regardless of the order of indices chosen to balance)
require $\Omega(1/\sqrt{\eps})$ iterations to balance
the matrix to the relative error of $\eps$.
\end{theorem}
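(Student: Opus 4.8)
The plan is to exhibit an explicit small family of matrices on which \emph{every} choice of balancing indices is slow, and to track the convergence via the convex potential $f$ introduced in Section~\ref{sec:preliminaries}. First I would look for the simplest possible instance: since a single balancing step on a $2\times2$ (zero-diagonal) matrix immediately makes it balanced, the true obstruction must come from a configuration where balancing one node necessarily \emph{unbalances} another, so no constant number of steps suffices. A natural candidate is a path-like or cycle-like weighted digraph of constant size (three or four nodes), say with the two ``middle'' indices carrying almost all the weight and the endpoint weights chosen so that the total imbalance $\|\nabla f(\x)\|_2$ starts at roughly $\sqrt{\eps}$ times the mass $f(\x)$ but cannot be driven below $\eps f(\x)$ in fewer than $\Omega(1/\sqrt{\eps})$ steps. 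The key design principle: make the instance ``almost balanced'' to begin with (imbalance $\Theta(\sqrt\eps)$ relative to total mass) so that the $\eps$-balanced target is only a factor $\sqrt\eps$ away, yet each individual balancing step can only repair a $\Theta(\eps)$-fraction of that remaining gap.

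The core of the argument would be a per-step progress bound from \emph{above}, complementing Lemma~\ref{lemma:reduction}. By Lemma~\ref{lemma:reduction}, balancing index $i$ decreases $f$ by exactly $(\sqrt{\|a_{.,i}\|_1}-\sqrt{\|a_{i,.}\|_1})^2$. On my instance I would show that for \emph{every} index $i$ and every reachable state, this quantity is at most $O(\eps^2) \cdot f$ (the mass on each node is comparable to the total mass, and the row/column sums of each node stay within a $(1+O(\sqrt\eps))$ factor of each other throughout, so $(\sqrt M - \sqrt m)^2 \le (M-m)^2/(4m) = O(\eps^2)\cdot f$). Since $f(\x^{(1)}) - f(\x^*)$ is itself only $\Theta(\eps)\cdot f(\x^{(1)})$ for the right initial choice — because the optimum is close — and since $f(\x^{(t)}) - f(\x^*) \ge \Omega(\eps^2)\cdot f$ is forced as long as the matrix is not yet $\eps$-balanced (here I would use Lemma~\ref{lemma:opt lower} run ``in reverse'', i.e. $f(\x)-f(\x^*) \le \frac n2 \|\nabla f(\x)\|_1$, together with the fact that on a constant-size graph $\|\nabla f\|_1 = \Theta(\|\nabla f\|_2)$, so not-$\eps$-balanced means $f-f^* \ge c\,\eps^2 f$ would be the wrong direction — actually I need the lower bound $f - f^* = \Omega(\|\nabla f\|_2^2 / f)$, which follows since one more greedy step decreases $f$ by $\Omega(\|\nabla f\|_2^2/f)$ and cannot overshoot the minimum). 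Dividing the initial gap $\Theta(\eps f)$ by the per-step decrease $O(\eps^2 f)$ gives $\Omega(1/\eps)$; to get the claimed $\Omega(1/\sqrt\eps)$ I would instead start the gap at $\Theta(\sqrt\eps \cdot f)$ by choosing the initial imbalance to be $\Theta(\eps^{1/4})$ relative to the mass — wait, this needs care — so the cleanest route is: initial relative imbalance $\delta_0 = \Theta(\eps^{1/2})$ is \emph{not} what I want either; I want each step to shrink the relative-imbalance-squared by only an additive $O(\eps^2)$ while starting from relative-imbalance-squared $\Theta(\eps)$, i.e. starting relative imbalance $\Theta(\sqrt\eps)$... this gives $\Omega(1/\eps)$ steps, too strong to be true, so in fact the per-step decrease in relative-imbalance-squared must itself be proportional to the current relative imbalance, giving a continuous model $d\delta/dt \approx -c\delta^2$ hence $\delta(t) \approx 1/(ct)$, and reaching $\delta = \eps$ from $\delta_0 = \sqrt\eps$ takes $\Theta(1/\eps - 1/\sqrt\eps) = \Theta(1/\eps)$... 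So the right picture must be $d\delta/dt \approx -c\delta^3$ or the instance must be larger; I would reconcile this by letting the number of ``active'' nodes be super-constant or by a more careful two-node analysis where $\delta_0 = \Theta(1)$ and $d\delta/dt \approx -c\delta^3$, giving $t = \Theta(1/\eps^2 - 1) $...

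Given that tension, the clean approach I would actually pursue: a \textbf{two-node} instance $A = \begin{pmatrix}0 & a\\ b & 0\end{pmatrix}$ is balanced in one step, so it is useless; hence use a \textbf{three-node cycle} $1\to 2\to 3\to 1$ with weights $(1, 1, 1+\eta)$ for small $\eta$. Here the product of weights on the cycle is the invariant $1+\eta$, and the balanced state has all three weights equal to $(1+\eta)^{1/3}$. The relative imbalance scales like $\eta$, and each balancing step is a coordinate move in the two-dimensional space $\{x_1+x_2+x_3 = 0\}$; I would compute exactly how much one such step reduces $\|\nabla f\|_2$ and show it is at most a $(1-O(\|\nabla f\|_2/f))$ factor near the optimum — i.e. the convergence is linear but with rate $1 - \Theta(\delta)$ where $\delta$ is the current relative imbalance, so starting from $\delta_0$ and driving to $\eps$ takes $\Theta(\log(\delta_0/\eps)/\delta_0)$ if the rate were constant, but the rate itself improves, yielding $\Theta(1/\eps)$ if $\delta_0 = \Theta(1)$... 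The resolution that matches the theorem is that near the optimum the Hessian of $f$ restricted to the cycle is \emph{degenerate to second order} along the coordinate directions, so a coordinate step from imbalance $\delta$ achieves decrease only $\Theta(\delta^2)$ in $\|\nabla f\|_2$, giving the recursion $\delta_{t+1} \le \delta_t - c\delta_t^2$ whose solution is $\delta_t = \Theta(1/t)$, hence $\Omega(1/\eps)$ steps — still stronger than stated. \textbf{So the honest plan}, and the main obstacle, is precisely this: I must find an instance where the per-step decrease in $\delta$ is $\Theta(\delta^{3/2})$ (perhaps by a cycle of length $\Theta(1/\sqrt\delta)$, so $n$ grows), yielding $\delta_t = \Theta(t^{-2})$ and thus exactly $\Theta(1/\sqrt\eps)$ iterations, and then verify for \emph{that} instance that no reordering of indices beats it — this robustness-over-all-orders verification, done by a symmetry argument on the cyclic structure so that every index looks the same, is the second delicate point. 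I would present the cycle construction, compute the invariant product constraint, linearize $f$ near its minimum to read off that each coordinate step on a length-$\ell$ cycle shrinks the squared imbalance by a $\Theta(1/\ell)$ fraction, choose $\ell = \Theta(1/\sqrt\eps)$ so that $\Theta(\ell)$ rounds through all indices barely make progress, and conclude that $\Omega(\ell) = \Omega(1/\sqrt\eps)$ full sweeps — hence $\Omega(1/\sqrt\eps)$ balancing steps — are required regardless of order. The hard part is pinning down the exact dependence on cycle length in the per-step decrease and showing the lower bound is order-insensitive; the rest is bookkeeping with $f$ and Lemma~\ref{lemma:reduction}.
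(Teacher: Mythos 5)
Your proposal never converges to an actual proof: after discarding several candidate constructions because your own heuristics keep producing the ``wrong'' rate, you settle on a long-cycle instance of length $\ell=\Theta(1/\sqrt{\eps})$ whose two crucial claims --- that every balancing step shrinks the squared relative imbalance by only a $\Theta(1/\ell)$ fraction, and that no ordering of indices can do better --- are exactly the parts you leave open. Neither claim is evident, and the second is doubtful as argued: the cyclic symmetry you invoke is broken as soon as the weights evolve, and an adaptive order can concentrate on the perturbed region (for a single perturbed arc on a cycle, the first few steps of a sensible order cut the gradient norm by constant factors), so the per-step bound would have to be proved for a carefully chosen spread-out perturbation and for every reachable state under every order --- genuine work that is absent. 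Moreover your instance has $n$ growing with $1/\sqrt{\eps}$, whereas the paper proves the bound on a fixed-size matrix.

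The paper's proof uses a different and much more rigid mechanism: a fixed $4\times 4$ matrix made of two heavy $2$-cycles (weights $\approx 1$, between nodes $1,2$ and $3,4$) joined by a light $2$-cycle with weights $\beta+\eps$ and $\eps$, where $\beta=100\eps$. Because balancing preserves the product of weights along every cycle, the ratio $a^{(t)}_{32}/a^{(t)}_{23}$ is a clean monovariant: Lemma~\ref{lemma:stopping} shows it must exceed $1/100$ (a constant-factor increase over its initial value $1/101$) before the matrix can be $\eps$-balanced, while Lemma~\ref{lemma:change} shows that a single balancing step of \emph{any} index --- hence under any order --- multiplies it by at most $(1+7\sqrt{\beta})/(1+\eps)=1+O(\sqrt{\eps})$. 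The per-step cap comes from monotonicity of $f$: since $f(\xt{1})-f(\x^*)<\beta$, the heavy arcs can never drift outside $1\pm O(\sqrt{\beta})$ of their initial values, which bounds the scaling factor applied when balancing nodes $2$ or $3$. What your plan is missing is precisely such a quantity that is (i) uniformly bounded per step over all orders and all reachable states and (ii) forced to change by a constant factor before $\eps$-balance is reached; without it, the proposal remains an outline rather than a proof.
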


Before proving this theorem, we present the claimed construction.
Let $A$ be the following $4\times 4$ matrix, and let $A^*$ denote
the corresponding fully-balanced matrix.
\begin{equation*}
A = 
\begin{bmatrix}
0 & 1 & 0 & 0 \\ 
1 & 0 & \beta+\eps & 0 \\ 
0 & \eps & 0 & 1\\
0 & 0 & 1& 0
 \end{bmatrix}\text{~~,~~}
A^*=
\begin{bmatrix}
0 & 1 & 0 & 0 \\ 
1 & 0 & \sqrt{\eps(\beta+\eps)} & 0 \\ 
0 & \sqrt{\eps(\beta+\eps)} & 0 & 1\\
0 & 0 & 1& 0
 \end{bmatrix}
\end{equation*}
Here $\eps > 0$ is arbitrarily small, and $\beta = 100\eps$. 
It's easy to see that $A^* = D^* A {D^*}^{-1}$ where 
$$
D=\allowbreak\diag\left(1,1,\sqrt{\frac{\beta + \eps}{\eps}}, \sqrt{\frac{\beta + \eps}{\eps}}\right). 
$$
To prove Theorem~\ref{thm: lower main}, we show that balancing $A$ to the relative error of $\eps$
requires $\Omega(1/\sqrt{\eps})$ iterations, regardless of the order of balancing 
operations. Notice that in order to fully balance $A$, we simply need to replace
$a_{23}$ and $a_{32}$ by their geometric mean. We measure the rate of 
convergence using the ratio $a_{32}/a_{23}$. This ratio is initially
$\frac{\eps}{\beta+\eps} = \frac{1}{101}$. When the matrix is fully balanced, 
the ratio becomes $1$. We show that this ratio increases by a small factor in 
each iteration, and that it has to increase sufficiently for the matrix to be
$\eps$-balanced. This is summarized in the following two lemmas.
\begin{lemma}[change in ratio]\label{lemma:change}
$\displaystyle\frac{a^{(t+1)}_{32}}{a^{(t+1)}_{23}}\le \Big(\frac{1+7\sqrt{\beta}}{1+\eps}\Big)\cdot\frac{a^{(t)}_{32}}{a^{(t)}_{23}}$.
\end{lemma}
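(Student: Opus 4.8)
The plan is to follow the ratio $\rho_t:=a^{(t)}_{32}/a^{(t)}_{23}$ through one balancing step and show it grows by at most the claimed factor no matter which index is balanced. First I would record the effect of each balancing operation on $\rho_t$. Balancing index $1$ (resp.\ $4$) rescales only the arcs $\{(1,2),(2,1)\}$ (resp.\ $\{(3,4),(4,3)\}$), leaving $a^{(t)}_{23}$ and $a^{(t)}_{32}$ unchanged, so $\rho_{t+1}=\rho_t$ and the inequality reduces to $1+\eps\le 1+7\sqrt\beta$, which is immediate since $\beta=100\eps$. Balancing index $2$ multiplies the out-arc $a^{(t)}_{23}$ by $\sqrt{\|a_{.,2}^{(t)}\|_1/\|a_{2,.}^{(t)}\|_1}$ and the in-arc $a^{(t)}_{32}$ by the reciprocal, so
$$
\rho_{t+1}=\frac{\|a_{2,.}^{(t)}\|_1}{\|a_{.,2}^{(t)}\|_1}\,\rho_t=\frac{a^{(t)}_{21}+a^{(t)}_{23}}{a^{(t)}_{12}+a^{(t)}_{32}}\,\rho_t,
$$
and symmetrically balancing index $3$ gives $\rho_{t+1}=\frac{a^{(t)}_{23}+a^{(t)}_{43}}{a^{(t)}_{32}+a^{(t)}_{34}}\,\rho_t$. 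Hence it suffices to bound each of these two ratios of entries by $\frac{1+7\sqrt\beta}{1+\eps}$.

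Next I would pin down the individual entries using two invariants valid at every time $t$. (i)~Balancing preserves the product of arc weights around every directed cycle, and the only simple directed cycles of $G_A$ are the three $2$-cycles; therefore $a^{(t)}_{12}a^{(t)}_{21}=1$, $a^{(t)}_{34}a^{(t)}_{43}=1$, and $a^{(t)}_{23}a^{(t)}_{32}=\eps(\beta+\eps)$ for all $t$. (ii)~By Lemma~\ref{lemma:reduction} the sum of all entries is non-increasing, hence at most its initial value $4+\beta+2\eps$. Combining the two: AM--GM on the $2$-cycles gives $a^{(t)}_{12}+a^{(t)}_{21}\ge 2$, $a^{(t)}_{34}+a^{(t)}_{43}\ge 2$ and $a^{(t)}_{23}+a^{(t)}_{32}\ge 2\sqrt{\eps(\beta+\eps)}\ge 2\eps$, so subtracting these from the total sum yields $a^{(t)}_{12}+a^{(t)}_{21}\le 2+\beta$, $a^{(t)}_{34}+a^{(t)}_{43}\le 2+\beta$, and $a^{(t)}_{23}+a^{(t)}_{32}\le \beta+2\eps$. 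Writing $a^{(t)}_{12}=e^{y}$ (so $a^{(t)}_{21}=e^{-y}$) and using $2\cosh y=a^{(t)}_{12}+a^{(t)}_{21}\le 2+\beta$ together with $\cosh y\ge 1+y^2/2$ gives $y^2\le\beta$, i.e. $a^{(t)}_{12},a^{(t)}_{21}\in[e^{-\sqrt\beta},e^{\sqrt\beta}]$; likewise $a^{(t)}_{34},a^{(t)}_{43}\in[e^{-\sqrt\beta},e^{\sqrt\beta}]$, while $a^{(t)}_{23},a^{(t)}_{32}\le\beta+2\eps\le 2\beta$.

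Plugging these bounds into the two ratios finishes the proof: for index $2$ the denominator is at least $a^{(t)}_{12}\ge e^{-\sqrt\beta}$ and the numerator is at most $a^{(t)}_{21}+a^{(t)}_{23}\le e^{\sqrt\beta}+2\beta$, so the multiplicative factor is at most $e^{\sqrt\beta}(e^{\sqrt\beta}+2\beta)=e^{2\sqrt\beta}+2\beta e^{\sqrt\beta}$, and the same bound holds for index $3$ by symmetry. It then remains only to check the elementary inequality $e^{2\sqrt\beta}+2\beta e^{\sqrt\beta}\le \frac{1+7\sqrt\beta}{1+\eps}$; expanding the left side as $1+2\sqrt\beta+O(\beta)$ and using $\frac{1+7\sqrt\beta}{1+\eps}\ge 1+7\sqrt\beta-O(\beta)$ (recall $\eps=\beta/100$), this holds whenever $\beta$—equivalently $\eps$—is below an absolute constant, which is allowed since $\eps$ may be taken arbitrarily small.

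The argument is almost entirely bookkeeping; the only point that needs a little care is this final elementary inequality, i.e.\ confirming that the $2\sqrt\beta$ growth we actually get comfortably fits inside the $7\sqrt\beta$ budget and still leaves room to absorb the $1/(1+\eps)$ factor. Since the gap between $2\sqrt\beta$ and $7\sqrt\beta$ is large, this is routine once $\eps$ is fixed small enough; everything else is forced by the cycle-product invariant and Lemma~\ref{lemma:reduction}.
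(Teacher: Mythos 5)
Your proof is correct and follows essentially the same route as the paper: reduce to bounding the multiplier $\frac{a^{(t)}_{21}+a^{(t)}_{23}}{a^{(t)}_{12}+a^{(t)}_{32}}$ (and its index-$3$ analogue) after noting that balancing indices $1$ or $4$ leaves the ratio unchanged, and then control the entries via the cycle-product invariants together with the monotonicity of the total entry sum (Lemma~\ref{lemma:reduction}). The only difference is bookkeeping: you bound $a^{(t)}_{23},a^{(t)}_{32}\le\beta+2\eps$ directly and drop $a^{(t)}_{32}$ from the denominator, absorbing the $1/(1+\eps)$ factor into the slack between your $\approx 1+2\sqrt{\beta}$ bound and the allowed $1+7\sqrt{\beta}$, whereas the paper keeps the $\eps$ term in the denominator via its $z\le 1$ contradiction argument; both work for sufficiently small $\eps$.
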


\begin{lemma}[stopping condition]\label{lemma:stopping}
If $A^{(t)}$ is $\eps$-balanced, then $\displaystyle\frac{a^{(t)}_{32}}{a^{(t)}_{23}} > \frac{1}{100}$.
\end{lemma}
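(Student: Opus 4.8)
The plan is a proof by contradiction: assume $A^{(t)}$ is $\eps$-balanced but $a^{(t)}_{32}/a^{(t)}_{23}\le\tfrac1{100}$, and derive an impossibility (we may freely take $\eps$ small, since the lower bound construction chooses it). I will write $S=\sum_{i,j}a^{(t)}_{ij}$ for the current matrix sum and, for $i\in\{1,2,3,4\}$, let $e_i=\|a^{(t)}_{.,i}\|_1-\|a^{(t)}_{i,.}\|_1$ be the signed imbalance at node $i$. By Definition~\ref{def:approx}, $\eps$-balancedness of $A^{(t)}$ is exactly the inequality $\sqrt{e_1^2+e_2^2+e_3^2+e_4^2}\le\eps S$; in particular every $|e_i|\le\eps S$.

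The first ingredient is the scaling invariant $a^{(t)}_{ij}a^{(t)}_{ji}=a_{ij}a_{ji}$ for all $i,j$: since $A^{(t)}=\bar{D}^{(t)}A(\bar{D}^{(t)})^{-1}$, the entry $a_{ij}$ is multiplied by $\bar{D}^{(t)}_{ii}/\bar{D}^{(t)}_{jj}$ and $a_{ji}$ by its reciprocal. For the three reciprocal-arc pairs of the construction this gives $a^{(t)}_{12}a^{(t)}_{21}=a^{(t)}_{34}a^{(t)}_{43}=1$ and $a^{(t)}_{23}a^{(t)}_{32}=(\beta+\eps)\eps=101\eps^2$. Combining $a^{(t)}_{23}a^{(t)}_{32}=101\eps^2$ with the contradiction hypothesis $a^{(t)}_{32}\le a^{(t)}_{23}/100$ forces $a^{(t)}_{23}\ge 10\sqrt{101}\,\eps$, hence
\[
a^{(t)}_{23}-a^{(t)}_{32}\ \ge\ \tfrac{99}{100}\,a^{(t)}_{23}\ \ge\ \tfrac{99}{100}\cdot 10\sqrt{101}\,\eps\ >\ 99\eps .
\]

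The second ingredient reads the imbalances off the explicit form of $A$: since the nonzero entries lie only on the path $1$--$2$--$3$--$4$ and its reverse arcs, a short computation gives $e_1+e_2=a^{(t)}_{32}-a^{(t)}_{23}$, so $|e_1+e_2|>99\eps$ and therefore $\sqrt{\sum_i e_i^2}\ge\sqrt{e_1^2+e_2^2}\ge|e_1+e_2|/\sqrt2>69\eps$; by $\eps$-balancedness this yields $S>69$. To contradict this I establish an a priori upper bound on $S$. Using $x+y=\sqrt{(x-y)^2+4xy}$ for positive $x,y$ together with the invariants above and $|e_i|\le\eps S$,
\[
a^{(t)}_{12}+a^{(t)}_{21}=\sqrt{e_1^2+4}\le\eps S+2,\qquad a^{(t)}_{34}+a^{(t)}_{43}=\sqrt{e_4^2+4}\le\eps S+2,
\]
\[
a^{(t)}_{23}+a^{(t)}_{32}=\sqrt{(e_1+e_2)^2+404\eps^2}\le 2\eps S+\sqrt{404}\,\eps .
\]
The sum of these three left-hand sides is exactly $S$, so $S\le 4\eps S+4+\sqrt{404}\,\eps$, i.e. $S\le(4+\sqrt{404}\,\eps)/(1-4\eps)<5$ once $\eps$ is small enough. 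This contradicts $S>69$, completing the proof.

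I do not expect any real obstacle: the only computations are the identity $e_1+e_2=a^{(t)}_{32}-a^{(t)}_{23}$ and elementary manipulations of $(x+y)^2=(x-y)^2+4xy$, and the factor $100$ in the construction leaves a wide margin (the final clash is $5$ versus $69$). The one point that needs care is that the upper bound on $S$ be obtained non-circularly: it is gotten by feeding the $\eps$-balanced hypothesis $|e_i|\le\eps S$ into the three displayed identities and then solving the resulting linear inequality for $S$, which is legitimate precisely because $\sqrt{\sum_i e_i^2}\le\eps S$ is the hypothesis in force.
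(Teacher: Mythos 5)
Your proof is correct, and it takes a genuinely different route from the paper's. The paper argues via the optimality gap: assuming $a^{(t)}_{32}/a^{(t)}_{23}<1/100$ it lower-bounds $f(\xt{t})-f(\x^*)\ge(\sqrt{a^{(t)}_{23}}-\sqrt{a^{(t)}_{32}})^2\ge 81\eps$, then upper-bounds the same gap by Lemma~\ref{lemma:opt lower} in terms of $\|\nabla f(\xt{t})\|_2$, and uses monotonicity of $f$ along the iteration ($f(\xt{t})\le f(\xt{1})\le 5$, and the parametrization $z\le 1$ inherited from the proof of Lemma~\ref{lemma:change}) to conclude that the relative imbalance exceeds $\eps$. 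You instead work purely from Definition~\ref{def:approx} and the cycle invariants $a^{(t)}_{12}a^{(t)}_{21}=a^{(t)}_{34}a^{(t)}_{43}=1$, $a^{(t)}_{23}a^{(t)}_{32}=101\eps^2$: the identity $e_1+e_2=a^{(t)}_{32}-a^{(t)}_{23}$ (which I verified from the sparsity pattern) combined with the ratio hypothesis forces $S>69$, while the identity $(x+y)^2=(x-y)^2+4xy$ applied to the three reciprocal arc pairs, together with $|e_i|\le\eps S$, forces $S<5$ for small $\eps$ --- a clean clash. What your version buys: it avoids the paper's key technical Lemma~\ref{lemma:opt lower} entirely, it does not use monotonicity of $f$ under balancing (so it actually shows that \emph{any} diagonal similarity of $A$ that is $\eps$-balanced must satisfy the conclusion, not just iterates of the OPR process), and its constants are checked with wide margin. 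What the paper's version buys is mainly thematic consistency: it reuses the machinery (gradient, optimality gap, Lemma~\ref{lemma:opt lower}) already developed for the upper bounds, at the cost of invoking the $n$-dependent bound of that lemma (harmless here since $n=4$). One stylistic note: your contradiction hypothesis $a^{(t)}_{32}/a^{(t)}_{23}\le 1/100$ actually yields the strict inequality claimed in the lemma, whereas the paper's proof, assuming $<1/100$, literally establishes only the non-strict version; this is immaterial given the slack, but your handling is the more precise of the two.
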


Before proving the two lemmas we show how they lead to the
proof of Theorem~\ref{thm: lower main}.
\begin{proof}[Proof of Theorem~\ref{thm: lower main}]
By Lemma~\ref{lemma:change},
$\frac{a^{(t+1)}_{32}}{a^{(t+1)}_{23}}\le 
\left(\frac{1+7\sqrt{\beta}}{1+\eps}\right)^t\cdot\frac{a_{32}}{a_{23}} =
\left(\frac{1+7\sqrt{\beta}}{1+\eps}\right)^t\cdot \frac{\eps}{\beta + \eps}$.
By Lemma~\ref{lemma:stopping}, if $A^{(t+1)}$ is $\eps$-balanced, then
$\frac{1}{100} < \frac{a^{(t+1)}_{32}}{a^{(t+1)}_{23}}\le 
\left(\frac{1+7\sqrt{\beta}}{1+\eps}\right)^t\cdot \frac{\eps}{\beta + \eps}\le
\left(1+7\sqrt{\beta}\right)^t\cdot \frac{\eps}{\beta + \eps}$.
Using $\beta=100\eps$, we get the condition that $(1+7\sqrt{\beta})^t > \frac{101}{100}$,
which implies that $t = \Omega(1/\sqrt{\eps})$.
\end{proof}

\begin{proof}[Proof of Lemma~\ref{lemma:change}]
Using the notation we defined earlier, we have that
$f(\xt{1})  = \sum_{i,j=1}^4 a_{ij} = 4 + 2\eps + \beta$ and
$f(\x^{*}) = \sum_{i,j=1}^4 a^*_{ij} = 4 + 2\sqrt{\eps(\beta+\eps)}$,
so $f(\xt{1}) - f(\x^*) < \beta$.
We observe that at each iteration $t$, $a^{(t)}_{12} a^{(t)}_{21}=a^{(t)}_{34}a^{(t)}_{43}=1$ and $a^{(t)}_{23}a^{(t)}_{32}=\eps(\beta + \eps)$ because the product of weights of arcs on any cycle in $G_A$ is preserved (for instance, arcs $(1,2)$ and $(2,1)$ form a cycle and initially $a_{12} a_{21}=1$). 

The ratio $a^{(t)}_{32}/a^{(t)}_{23}$ is only affected in iterations that balance index $2$ or $3$. Let's assume a balancing operation at index $2$, a similar analysis applies to balancing at index $3$. By balancing at index $2$ at time $t$ we have
\begin{equation}\label{eq9}
\frac{a^{(t+1)}_{32}}{a^{(t)}_{32}} = \frac{a^{(t)}_{23}}{a^{(t+1)}_{23}}=\sqrt{\frac{a^{(t)}_{21}+a^{(t)}_{23} }{a^{(t)}_{12} + a^{(t)}_{32}}}.
\end{equation}
Thus, to prove Lemma~\ref{lemma:change}, it suffices to show that
\begin{equation}
\frac{a^{(t+1)}_{32}}{a^{(t)}_{32}} \cdot 
\frac{a^{(t)}_{23}}{a^{(t+1)}_{23}} = \frac{a^{(t)}_{21}+a^{(t)}_{23} }{a^{(t)}_{12} + a^{(t)}_{32}} \le \frac{1+7\sqrt{\beta}}{1+\eps}.
\end{equation}
By our previous observation, $a^{(t)}_{12} a^{(t)}_{21}=1$, so if $a^{(t)}_{21}=y$, then $a^{(t)}_{12}=1/y$. Similarly $a^{(t)}_{23}a^{(t)}_{32}=\eps(\beta + \eps)$ implies that there exists $z$ such that $a^{(t)}_{23}=(\beta+\eps)z$ and $a^{(t)}_{32}=\eps/z$. Therefore:
\begin{equation}\label{eq8}
\frac{a^{(t)}_{21}+a^{(t)}_{23} }{a^{(t)}_{12} + a^{(t)}_{32}} = 
{\frac{y + (\beta +\eps)z}{\displaystyle(1/y) + (\eps/z)}}
\end{equation}

We bound the right hand side of Equation~\eqref{eq8} by proving 
upper bounds on $y$ and $z$. We first show that
$y < 1 + 2\sqrt{\beta}$.
%
To see this notice that on the one hand,
\begin{equation}\label{eq6}
f(\xt{t}) = \sum_{i,j=1}^4 a^{(t)}_{ij} =  a^{(t)}_{12} + a^{(t)}_{21} + a^{(t)}_{23} + a^{(t)}_{32} + a^{(t)}_{34} + a^{(t)}_{43} \ge y + \frac{1}{y} + 2\sqrt{\eps(\beta + \eps)} + 2,
\end{equation}
where we used $a^{(t)}_{34} + a^{(t)}_{43} \ge 2$ and 
$a^{(t)}_{34} a^{(t)}_{43} \ge 2\sqrt{\eps(\beta + \eps)}$, 
both implied by the arithmetic-geometric mean inequality.
On the other hand, 
\begin{equation}\label{eq7}
f(\xt{t}) \le f(\xt{1}) \le f(\x^*) + \beta = 4 + 2\sqrt{\eps(\beta + \eps)} + \beta.
\end{equation}
Combining Equations~\eqref{eq6} and~\eqref{eq7} together, we have 
$y + ({1}/{y})  - 2 \le \beta$. For sufficiently small $\eps$, the last inequality 
implies, in particular, that $y < 2$. Thus, we have 
$(y-1)^2 \le y\beta < 2\beta$, and this implies that $y < 1 + 2\sqrt{\beta}$.

Next we show that
%
$z \le 1$.
%
Assume for contradiction that $z >1$. By the arithmetic-geometric mean 
inequality $a^{(t)}_{12} + a^{(t)}_{21}\ge 2$ and $a^{(t)}_{34} + a^{(t)}_{43}\ge 2$. Thus,
\begin{eqnarray*}
f(\xt{t})   =  \sum_{i,j=1}^4 a^{(t)}_{ij} 
 \ge  2 + (\beta + \eps)z + \frac{\eps}{z} + 2
 = 4 + \beta z + \eps\left(z + \frac{1}{z}\right)
 >  4 + \beta + 2\eps = f(\xt{1}),
\end{eqnarray*}
where the last inequality follows because $z > 1$, and $z+ 1/z > 2$.
But this is a contradiction, because each balancing iteration reduces 
the value of $F$, so $f(\xt{t})\le f(\xt{1})$.

We can now bound $({a^{(t)}_{21}+a^{(t)}_{23} })/({a^{(t)}_{12} + a^{(t)}_{32}})$. By
Equation~\eqref{eq8}, and using 
our bounds for $y$ and $z$,
$$
\frac{a^{(t)}_{21}+a^{(t)}_{23} }{a^{(t)}_{12} + a^{(t)}_{32}}
 = {\frac{y + (\beta +\eps)z}{(1/y) + (\eps/z)}}
\le {\frac{(1+2\sqrt{\beta}) + (\beta +\eps)}{\displaystyle\frac{1}{1 + 2\sqrt{\beta}} + {\eps}}}
\le  {\frac{1+4\sqrt{\beta}}{\displaystyle\frac{1}{1 + 2\sqrt{\beta}} + \frac{\eps}{1+2\sqrt{\beta}}}}
\le {\frac{1+7\sqrt{\beta}}{1+\eps}}.
$$
The last line uses the fact that $\sqrt{\beta}\gg \beta=100\eps\ge \eps$, 
which holds if $\eps$ is sufficiently small.
\end{proof}

\begin{proof}[Proof of Lemma~\ref{lemma:stopping}] 
Let $t-1$ be the last iteration before an $\eps$-balanced matrix is obtained. 
We argued that there is $z\le1$ such that $a^{(t)}_{23}=(\beta + \eps)z$ 
and $a^{(t)}_{32}=\eps/z$. Assume for the sake of contradiction that 
${a^{(t)}_{32}}/{a^{(t)}_{23}}< 1/100$. This implies that 
$(\eps/z)/((\beta + \eps)z) < 1/100$, and thus $z^2 > 100/101$. So, we get
\begin{eqnarray}
\nonumber
f(\xt{t}) - f(\x^*) &\ge& a^{(t)}_{23} + a^{(t)}_{32} - 2\sqrt{a^{(t)}_{23}a^{(t)}_{32}} = \left(\sqrt{a^{(t)}_{23}}-\sqrt{a^{(t)}_{32}}\right)^2 \ge a^{(t)}_{23}\Big(1-\sqrt{\frac{1}{100}}\Big)^2\\
\label{eq4}
& =  & 0.81\cdot(\beta + \eps)z \ge 
0.81\cdot(\beta + \eps)\cdot\sqrt{\frac{100}{101}}\ge 81\cdot\eps.
\end{eqnarray}
By Lemma~\ref{lemma:opt lower}, the left hand side the of above can 
be bounded as follows.
\begin{equation}\label{eq5}
f(\xt{t}) - f(\x^*) \le n \|\nabla f(\xt{t})\|_1 \le n^2 \|\nabla f(\xt{t})\|_2
\end{equation}
Note that for sufficiently small $\eps$, $f(\xt{t})\le f(\xt{1})\le 5$. 
Combining Equations~\eqref{eq4} and~\eqref{eq5}, and using 
$n=4$ and $f(\xt{t})\ge 5$, we get that
\begin{equation}
\frac{\|\nabla f(\xt{t})\|_2}{f(\xt{t})} > \frac{81}{80}\cdot \eps > \eps.
\end{equation}
By Equation~\eqref{eq:stop condition}, this contradicts our assumption 
that $t-1$ is the last iteration. 
\end{proof}

\section{Proofs}\label{proofs}

\begin{proof}[Proof of Lemma~\ref{lemma:progress}]
The value $f(\xt{t})$ is the sum of the entries of $A^{(t)}$. 
By Lemma~\ref{lemma:reduction}, balancing the $i$-th index 
of $A^{(t)}$ reduces the value of $f(\xt{t})$ by 
$\left(\sqrt{\|a_{.,i}^{(t)}\|_1} - \sqrt{\|a_{i,.}^{(t)}\|_1}\right)^2$. 
To simplify notation, we drop the superscript $t$ in the following 
equations. We have
\begin{equation}
\left(\sqrt{\ali{i}} - \sqrt{\alo{i}}\right)^2 = 
\frac{\left(\ali{i} - \alo{i}\right)^2}{\left(\sqrt{\ali{i}} + \sqrt{\alo{i}}\right)^2} \ge 
\frac{\left(\ali{i} - \alo{i}\right)^2}{2\left(\ali{i} + \alo{i}\right)}. \label{eq1}       
\end{equation}
It is easy to see that
\begin{equation}
\max_{i\in[n]} \frac{\left(\ali{i} - \alo{i}\right)^2}{\left(\ali{i} + \alo{i}\right)} \ge
\frac{\sum_{i=1}^n\left(\ali{i} - \alo{i}\right)^2}{\sum_{i=1}^n \left(\ali{i} + \alo{i}\right)}.\label{eq2}
\end{equation}
But the right hand side of the above inequality (after resuming the 
use of the superscript $t$) equals $\frac{\|\nabla f(\xt{t})\|_2^2}{2f(\xt{t})}$. 
This is because for all $i$, $\left(\|a_{i,.}^{(t)}\|_1-\|a_{.,i}^{(t)}\|_1\right)$ is by
Equation~\eqref{eq:partial} the $i$-th coordinate of $\nabla f(\xt{t})$, 
and in the denominator 
$\sum_{i=1}^n \left(\|a_{i,.}^{(t)}\|_1 + \|a_{.,i}^{(t)}\|_1\right) = 2f(\xt{t})$. 
Together with Equations~\eqref{eq1} and~\eqref{eq2}, this implies that 
balancing 
$i_t=\argmax_{i\in[n]}\left\{\left(\sqrt{\|a_{.,i}^{(t)}\|_1} - 
\sqrt{\|a_{i,.}^{(t)}\|_1}\right)^2\right\}$
decreases $f(\xt{t})$ by the claimed value. 
\end{proof}

\begin{proof}[Proof of Corollary~\ref{cor:progress}]
From Equation~\eqref{eq:stop condition}, we know that the diagonal 
matrix $\diag(e^{x_1},\ldots,e^{x_n})$ balances $A$ with relative error 
$\eps$ if and only if
$\displaystyle\frac{\|\nabla f(\x)\|_2}{f(\x)} \le \eps$.
Thus, if $A^{(t)}$ is not $\eps$-balanced, 
$\displaystyle\frac{\|\nabla f(\xt{t})\|_2}{f(\xt{t})} > \eps$. 
By Lemma~\ref{lemma:progress},
$f(\xt{t})-f(\xt{t+1)}) \ge \frac{\|\nabla f(\xt{t})\|_2^2}{4f(\xt{t})} = 
\frac{1}{4}\cdot\left(\frac{\|\nabla f(\xt{t})\|_2}{f(\xt{t})}\right)^2\cdot f(\xt{t}) \ge 
\frac{\eps^2}{4}\cdot f(\xt{t})$.
\end{proof}

\begin{proof}[Proof of Theorem~\ref{thm:original}]
In the original Osborne-Parlett-Reinsch algorithm, the indices are balanced 
in a fixed round-robin order. A \emph{round} of balancing is a sequence of 
$n$ balancing operations where each index is balanced exactly once. Thus, 
in the OPR algorithm all $n$ indices are balanced in the same order every 
round. We prove a more general statement that any algorithm that balances 
indices in rounds (even if the indices are not balanced in the same order 
every round) obtains an $\eps$-balanced matrix in at most $O((n\log w)/\eps^2)$ 
rounds. To this end, we show that applying a round of balancing to a matrix 
that is not $\eps$-balanced reduces the value of function $f$ at least by a 
factor of $1-{\eps^2}/{16n}$.

To simplify notation, we consider applying a round of balancing to the initial 
matrix $A^{(1)} = A$. The argument clearly holds for any time-$t$ matrix 
$A^{(t)}$. If $A$ is not $\eps$-balanced, by Lemma~\ref{lemma:progress} 
and Corollary~\ref{cor:progress}, there exists an index $i$ such that by 
balancing $i$ the value of $f$ is reduced by:
\begin{equation}\label{eq18}
f(\xt{1})-f(\xt{2}) = \left(\sqrt{\|a_{.,i}\|_1} - \sqrt{\|a_{i,.}\|_1}\right)^2 \ge 
\frac{\eps^2}{4}f(\xt{1}).
\end{equation}

If $i$ is the first index to balance in the next round of balancing, then in 
that round the value of $f$ is reduced at least by a factor of 
$1-{\eps^2}/{4} \ge 1-{\eps^2}/{16n}$, and we are done. Consider the 
graph $G_A$ corresponding to the matrix $A$. If node $i$ is not the 
first node in $G_A$ to be balanced, then some of its neighbors in the 
graph $G_A$ might be balanced before $i$. The main problem is that 
balancing neighbors of $i$ before $i$ may reduce the imbalance of 
$i$ significantly, so we cannot argue that when we reach $i$ and
balance it the value of $f$ reduces significantly. Nevertheless, we show 
that balancing $i$ and its neighbors in this round will reduce the value 
of $f$ by at least the desired amount. Let $t$ denote the time that 
$i$ is balanced in the round. For every arc $(j,i)$ into $i$, let 
$\delta_{j} = |a_{ji}-a^{(t)}_{ji}|$, and for every arc $(i,j)$ out of $i$ 
let $\sigma_{j}=|a_{ij}-a^{(t)}_{ij}|$. These values measure the weight 
change of these arcs due to balancing a neighbor of $i$ at any time 
since the beginning of the round. The next lemma shows if the weight 
of an arc incident on $i$ has changed since the beginning of the round, 
it must have reduced the value of $f$.

\begin{claim}\label{claim:neighbors}
If balancing node $j$ changes $a_{ji}$ to $a_{ji} + \delta$, then the 
balancing reduces the value of $f$ by at least $\delta^2/a_{ji}$. 
Similarly if balancing node $j$ changes $a_{ij}$ to $a_{ij} +\delta$, 
then the balancing reduces the value of $f$ by at least $\delta^2/a_{ij}$. 
\end{claim}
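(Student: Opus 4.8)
The plan is to unwind the definition of a balancing operation at node $j$ and invoke Lemma~\ref{lemma:reduction} directly. Recall that balancing node $j$ multiplies every entry of row $j$ (the arcs leaving $j$) by $c := \sqrt{\|a_{.,j}\|_1/\|a_{j,.}\|_1}$ and every entry of column $j$ (the arcs entering $j$) by $1/c$, and that by Lemma~\ref{lemma:reduction} this decreases $f$ (the sum of the matrix entries) by exactly $\left(\sqrt{\|a_{.,j}\|_1}-\sqrt{\|a_{j,.}\|_1}\right)^2$. All quantities here refer to the matrix just before node $j$ is balanced.

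For the first assertion, note that $a_{ji}$ lies in row $j$, so after balancing it becomes $c\cdot a_{ji}$; hence $\delta=(c-1)a_{ji}$ and $\delta^2/a_{ji}=(c-1)^2a_{ji}$. Expanding, $(c-1)^2=\left(\sqrt{\|a_{.,j}\|_1}-\sqrt{\|a_{j,.}\|_1}\right)^2/\|a_{j,.}\|_1$, and since $a_{ji}$ is one of the non-negative summands of the row norm we have $a_{ji}\le\|a_{j,.}\|_1$, whence $\delta^2/a_{ji}\le\left(\sqrt{\|a_{.,j}\|_1}-\sqrt{\|a_{j,.}\|_1}\right)^2$, which is precisely the decrease in $f$. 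The second assertion is symmetric: $a_{ij}$ lies in column $j$, so it is scaled by $1/c$, giving $\delta=(1/c-1)a_{ij}$ and $\delta^2/a_{ij}=(1/c-1)^2a_{ij}=\left(\sqrt{\|a_{.,j}\|_1}-\sqrt{\|a_{j,.}\|_1}\right)^2a_{ij}/\|a_{.,j}\|_1\le\left(\sqrt{\|a_{.,j}\|_1}-\sqrt{\|a_{j,.}\|_1}\right)^2$, again using $a_{ij}\le\|a_{.,j}\|_1$.

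There is essentially no obstacle here; the only points requiring care are the bookkeeping of which norm ($\|a_{j,.}\|_1$ versus $\|a_{.,j}\|_1$) ends up in the denominator after expanding $(c-1)^2$ or $(1/c-1)^2$, and observing that the sign of $\delta$ is irrelevant since only $\delta^2$ enters the bound. (In the application inside the proof of Theorem~\ref{thm:original}, $\delta$ is taken in absolute value in any case, so this is harmless.)
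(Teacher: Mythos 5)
Your proof is correct and follows essentially the same route as the paper's: identify the scaling factor $\sqrt{\|a_{.,j}\|_1/\|a_{j,.}\|_1}$ applied to row $j$ (and its reciprocal to column $j$), express $\delta$ through it, invoke Lemma~\ref{lemma:reduction} for the exact decrease $\bigl(\sqrt{\|a_{.,j}\|_1}-\sqrt{\|a_{j,.}\|_1}\bigr)^2$, and finish with $a_{ji}\le\|a_{j,.}\|_1$ (respectively $a_{ij}\le\|a_{.,j}\|_1$). The only difference is cosmetic: the paper rearranges the same identity in the other direction, writing the decrease as $(\delta/a_{ji})^2\|a_{j,.}\|_1\ge\delta^2/a_{ji}$.
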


\begin{proof}
To simplicity notation we assume that $j$ is balanced in the first 
iteration of the round. If balancing $j$ changes $a_{ji}$ to 
$a_{ji} + \delta$, then by the definition of balancing,
\begin{equation}
\frac{a_{ji}+\delta}{a_{ji}} = \sqrt{\frac{\|a_{.,j}\|_1}{\|a_{j,.}\|_1}}.
\end{equation}
Thus, by Lemma~\ref{lemma:reduction} the value of $f$ reduces by
\begin{equation*}
\left(\sqrt{\ali{j}}-\sqrt{\alo{j}}\right)^2 = \left(\sqrt{\frac{\ali{j}}{\alo{j}}} -1\right)^2\alo{j} = \left(\displaystyle\frac{a_{ji}+\delta}{a_{ji}}-1\right)^2\alo{j} =
\left(\frac{\delta}{a_{ji}}\right)^2\alo{j} \ge \frac{\delta^2}{a_{ji}}
\end{equation*}
The proof for the second part of the claim is similar.
\end{proof}

Going back to the proof of Theorem~\ref{thm:original},
let $t$ denote the iteration in the round that $i$ is balanced. 
By Claim~\ref{claim:neighbors}, balancing neighbors of $i$ 
has already reduced the value of $f$ by 
\begin{equation}
\sum_{j:(j,i)\in E}\frac{\delta_j^2}{a_{ji}} + 
\sum_{j:(i,j)\in E}\frac{\sigma_j^2}{a_{ij}}.
\end{equation}
Balancing $i$ reduces value of $f$ by an additional 
$\left(\sqrt{\|a^{(t)}_{.,i}\|_1} - \sqrt{\|a^{(t)}_{i,.}\|_1}\right)^2$, so 
the value of $f$ in the current round is reduced by at least:
\begin{equation*}
R=\sum_{j:(j,i)\in E}\frac{\delta_j^2}{a_{ji}} + 
\sum_{j:(i,j)\in E}\frac{\sigma_j^2}{a_{ij}} + 
\left(\sqrt{\|a^{(t)}_{.,i}\|_1} - \sqrt{\|a^{(t)}_{i,.}\|_1}\right)^2
\end{equation*}
Assume without loss of generality that $\alo{i} > \ali{i}$. To lower 
bound $R$, we consider two cases:
\begin{description}
\item[case (i)] $\displaystyle\sum_{j:(j,i)\in E} \delta_j + 
\sum_{j:(i,j)\in E} \sigma_j \ge \frac{1}{2}(\alo{i}-\ali{i})$.
In this case,
\begin{align}
R \ge \sum_{j:(j,i)\in E}\frac{\delta_j^2}{a_{ji}} + 
\sum_{j:(i,j)\in E}\frac{\sigma_j^2}{a_{ij}} &\ge \frac{1}{\ali{i}}\sum_{j:(j,i)\in E}\delta_j^2 + \frac{1}{\alo{i}}\sum_{j:(i,j)\in E}\sigma_j^2\nonumber\\ 
&\ge  \frac{1}{n\ali{i}}(\sum_{j:(j,i)\in E}\delta_j)^2 + \frac{1}{n\alo{i}}(\sum_{j:(i,j)\in E}\sigma_j)^2, \label{eq10}
\end{align}
where the last inequality follows by Cauchy-Schwarz inequality. 
By assumption of case (i),
\begin{equation}\label{eq11}
\max(\sum_{j:(j,i)\in E} \delta_j, \sum_{j:(i,j)\in E}\sigma_j) \ge \frac{1}{4}(\alo{i}-\ali{i})
\end{equation}
Equations~\eqref{eq10} and~\eqref{eq11} together imply that
\begin{align*}
R &\ge \frac{(\sum_{j:(j,i)\in E}\delta_j)^2 + (\sum_{j:(i,j)\in E}\sigma_j)^2}{n\max(\ali{i}, \alo{i})} \ge \frac{1}{16n}\frac{(\alo{i}-\ali{i})^2}{\max(\ali{i}, \alo{i})} \\
&= \frac{\left(\sqrt{\|a_{.,i}\|_1}-\sqrt{\|a_{i,.}\|_1}\right)^2\left(\sqrt{\|a_{.,i}\|_1}+\sqrt{\|a_{i,.}\|_1}\right)^2}{16n\max(\ali{i}, \alo{i})} \ge \frac{1}{16n}\left(\sqrt{\|a_{.,i}\|_1}-\sqrt{\|a_{i,.}\|_1}\right)^2.
\end{align*}

\item[case (ii)] $\displaystyle\sum_{j:(j,i)\in E} \delta_j + \sum_{j:(i,j)\in E} \sigma_j < \frac{1}{2}(\alo{i}-\ali{i})$. By definition of $\delta_j$'s and $\sigma_j$'s:
\begin{align}
\ali{i} - \sum_{j:(j,i)\in E} \delta_j \le \|a^{(t)}_{.,i}\|_1 \le \ali{i}  + \sum_{j:(j,i)\in E} \delta_j\label{eq20}\\
\alo{i} - \sum_{j:(i,j)\in E} \sigma_j \le \|a^{(t)}_{i,.}\|_1 \le \alo{i}  +\sum_{j:(i,j)\in E} \sigma_j\label{eq21}.
\end{align}
Combining Equations~\eqref{eq20} and~\eqref{eq21}, and the 
assumption of case (ii) gives:
\begin{align}
\|a^{(t)}_{i,.}\|_1 + \|a^{(t)}_{.,i}\|_1 \le  \alo{i} +  \ali{i} + \sum_{j:(i,j)\in E} \sigma_j  + \sum_{j:(j,i)\in E} \delta_j \le 2\left(\alo{i} + \ali{i}\right)\label{eq22}\\
\|a^{(t)}_{i,.}\|_1 - \|a^{(t)}_{.,i}\|_1 \ge \alo{i}  - \ali{i} - \sum_{j:(i,j)\in E} \sigma_j  - \sum_{j:(j,i)\in E} \delta_j \ge \frac{1}{2}\left(\alo{i} - \ali{i}\right).\label{eq23}
\end{align}
Using Equations~\eqref{eq22} and~\eqref{eq23}, we can write:
\begin{align*}
R \ge \left(\sqrt{\|a^{(t)}_{i,.}\|_1} - \sqrt{\|a^{(t)}_{.,i}\|_1}\right)^2 &= \frac{\left(\|a^{(t)}_{.,i}\|_1 - \|a^{(t)}_{i,.}\|_1 \right)^2}{\left(\sqrt{\|a^{(t)}_{.,i}\|_1} + \sqrt{\|a^{(t)}_{i,.}\|_1}\right)^2} \ge \frac{\left(\alo{i} -\ali{i}\right)^2}{8\left(\|a^{(t)}_{i,.}\|_1 + \|a^{(t)}_{.,i}\|_1\right)}\\
&\ge \frac{\left(\alo{i} -\ali{i}\right)^2}{16\left(\alo{i} + \ali{i}\right)} \ge \frac{1}{16}\left(\sqrt{\alo{i}} -\sqrt{\ali{i}}\right)^2.
\end{align*}
\end{description}
Thus, we have shown in both cases that in one round the balancing 
operations on node $i$ and its neighbors reduces the value of $f$ by 
at least 
\begin{equation}
\frac{1}{16n}\left(\sqrt{\|a_{.,i}\|_1}-\sqrt{\|a_{i,.}\|_1}\right)^2,
\end{equation}
which in turn is at least $\Omega(\frac{\eps^2}{n}f(\xt{1}))$ by 
Equation~\eqref{eq18}. Thus, we have shown that if $A$ is not 
$\eps$-balanced, one round of balancing (where each index is 
balanced exactly once) reduces the objective function $f$ by a 
factor of at least $1-\Omega\left(\frac{\eps^2}{n}f(\xt{1})\right)$. 
By an argument similar to the one in the proof of 
Theorem~\ref{thm:greedy}, we get that the algorithm obtains 
an $\eps$-balanced matrix in at 
most $O(\eps^{-2}n\log w)$ rounds. The number of balancing
iterations in each round is $n$, and the number of arithmetic 
operations in each round is $O(m)$, so the original OPR algorithm 
obtains an $\eps$-balanced matrix using 
$O(\eps^{-2}mn\log w)$ arithmetic operations.
\end{proof}

\begin{algorithm}[t]
\caption{RandomBalance($A$, $\eps$)}\label{alg:random}
\begin{algorithmic}[1]
 \Input{Matrix $A \in\mathbb{R}^{n\times n}, \eps$}
\Output{An $\eps$-balanced matrix}\vspace{2mm}

\State $r = a_{\min}\cdot({\eps}/{wn})^{10}$
\State Let $\hat{a}_{ij}^{(1)} = a_{ij}^{(1)}$ for all $i$ and $j$
\State Let $\hao = \|a_{i,.}^{(t)}\|$ and $\hai = \|a_{.,i}^{(t)}\|$ for all $i$
\For{$t= 1$ to $O(\eps^{-2}\ln w)$}
\State Pick $i$ randomly with probability $p_i = \frac{\hao + \hai}{2\sum_{i,j}\haij}$
\If {$\hao + \hai \ge \eps a_{\min} / 10 w n$}
\State $\hat{M}_i = \max\{\hao, \hai\}$, $\hat{m}_i = \min\{\hao, \hai\}$
\If{$\hat{m}_i = 0$ {\bf or} $\hat{M}_i/\hat{m}_i \ge 1 + \eps/n$}
\If{$\hat{m}_i \neq 0$}            $\alpha = \frac{1}{2}\ln(\hai/\hao)$
\ElsIf{$\hat{m}_i = \hai = 0$}  $\alpha = \frac{1}{2}\ln(nr/\hao)$
\ElsIf{$\hat{m}_i = \hao = 0$} $\alpha = \frac{1}{2}\ln(\hai/nr)$
\EndIf
\State Let $\hxt{t+1}\gets\hxt{t} + \alpha\mathbf{e}_i$ (truncated to 
          $O(\ln(wn/\eps))$ bits of precision)
\For{$j=1$ to $n$}
 \If{$j$ is a neighbor of $i$}
 \State $\widehat{a}_{ij}^{(t+1)} \gets a_{ij}e^{\hx_i^{(t+1)}- \hx_j^{(t+1)}}$ 
 and $\widehat{a}_{ji}^{(t+1)} \gets a_{ji}e^{\hx_j^{(t+1)}- \hx_i^{(t+1)}}$,
 (truncated to $O(\ln(wn/\eps))$ bits)
 \State $\|\widehat{a}_{j,.}^{(t+1)}\| = \|\widehat{a}_{j,.}^{(t)}\| - \haji + \widehat{a}_{ji}^{(t+1)}$
 and $\|\widehat{a}_{.,j}^{(t+1)}\| = \|\widehat{a}_{.,j}^{(t)}\| - \haij + \widehat{a}_{ij}^{(t+1)}$
 \EndIf
\EndFor
\State  $\|\widehat{a}_{i,.}^{(t+1)}\| = \sum_{j=1}^n \widehat{a}_{ij}^{(t+1)}$ 
           and $\|\widehat{a}_{.,i}^{(t+1)}\|= \sum_{j=1}^n \widehat{a}_{ji}^{(t+1)}$
\EndIf
\EndIf
\EndFor
\State\textbf{return} the resulting matrix
\end{algorithmic}
\end{algorithm}

\begin{proof}[Proof of Lemma~\ref{lemma:case}]
Notice that for every $i$,
$$
(\hao + \hai)\cdot \left(\sqrt{\|a_{i,.}^{(t)}\|}-\sqrt{\|a_{.,i}^{(t)}\|}\right)^2
\le (\hao + \hai) \cdot 
\frac{\left({\|a_{i,.}^{(t)}\|}-{\|a_{.,i}^{(t)}\|}\right)^2}{\|a_{i,.}^{(t)}\| + \|a_{.,i}^{(t)}\|}\le
\left(\|a_{i,.}^{(t)}\|-\|a_{.,i}^{(t)}\|\right)^2,
$$
because $\hao + \hai\le \|a_{i,.}^{(t)}\| + \|a_{.,i}^{(t)}\|$.
We first bound the sum over $i\notin A$.
\begin{eqnarray*}
           \sum_{i\notin A} p_i\cdot\left(\sqrt{\|a_{i,.}^{(t)}\|}-\sqrt{\|a_{.,i}^{(t)}\|}\right)^2 
& = &  \sum_{i\notin A} \frac{\hao + \hai}{2\sum_{i,j}\haij}\cdot
           \left(\sqrt{\|a_{i,.}^{(t)}\|}-\sqrt{\|a_{.,i}^{(t)}\|}\right)^2  \\ 
&\le&  \frac{1}{2\sum_{i,j}\haij}\cdot\sum_{i\notin A}
           \left(\|a_{i,.}^{(t)}\|-\|a_{.,i}^{(t)}\|\right)^2 \\
&\le&  \frac{1}{2\sum_{i,j}\haij}\cdot\sum_{i\notin A}
           \left(\|\widehat{a}_{i,.}^{(t)}\|+\|\widehat{a}_{.,i}^{(t)}\| + 2nr\right)^2 \\
&\le&  \frac{1}{2\sum_{i,j}\haij}\cdot \sum_{i\notin A} (2\eps a_{\min}/10wn)^2 \\
&\le&  \frac{1}{2\sum_{i,j}\haij}\cdot n\cdot (\eps a_{\min}/5wn)^2 \le 
            \frac{\eps^2}{25n}\cdot a_{\min}\le \frac{\eps^2}{25n}\cdot f(\xt{t})
\end{eqnarray*}
where the second inequality follows because,
for every $j$, $a_{ij}^{(t)} \le \widehat{a}_{ij}^{(t)} + r$ and 
$a_{ji}^{(t)} \le \widehat{a}_{ji}^{(t)} + r$, and the third inequality
follows because $\hao + \hai < \eps a_{\min}/10wn$ and 
$nr < \eps a_{\min}/10wn$.

Next, we bound the sum over $i\in A\setminus (B\cup C)$.
Recall $\hat{M}_i = \max\{\hao, \hai\}$ and 
$\hat{m}_i = \min\{\hao, \hai\}$. 
Put $M_i = \max\{\|a_{i,.}^{(t)}\|,\|a_{.,i}^{(t)}\|\}$ and 
$m_i = \min\{\|a_{i,.}^{(t)}\|, \|a_{.,i}^{(t)}\|\}$. Let
$k = \arg\max_{i\in A\setminus (B\cup C)} (M_i - m_i)^2$.
We have
\begin{eqnarray*}
         \sum_{i\in A\setminus (B\cup C)} p_i\cdot
         \left(\sqrt{\|a_{i,.}^{(t)}\|}-\sqrt{\|a_{.,i}^{(t)}\|}\right)^2 
&\le& \frac{1}{2\sum_{i,j}\haij}\cdot 
          \sum_{i\in A\setminus(B\cup C)}\left({\|a_{i,.}^{(t)}\|}-\|a_{.,i}^{(t)}\|\right)^2 \\
&\le& \frac{1}{2\sum_{i,j}\haij}\cdot \sum_{i\in A\setminus(B\cup C)}\left(M_i - m_i\right)^2 \\
&\le& \frac{1}{2\sum_{i,j}\haij}\cdot n\cdot m_k^2\left(\frac{M_k}{m_k}-1\right)^2.
\end{eqnarray*}
To bound the last quantity, we prove an upper bound on $\frac{M_k}{m_k}$ 
using the fact that $\frac{\hat{M}_k}{\hat{m}_k} < 1 + \frac{\eps}{n}$. As 
$k\in A$, we have $\hat{M}_k + \hat{m}_k = \khao + \khai \ge 
\frac{\eps a_{\min}}{10wn}$. Thus, 
$\hat{M}_k \ge \frac{\eps a_{\min}}{20wn}$. Combining this with 
$\frac{\hat{M}_k}{\hat{m}_k} < 1 + \frac{\eps}{n}$ implies that 
$\hat{m}_k > \frac 1 2 \hat{M}_k \ge \frac{\eps a_{\min}}{40wn}$.
Hence,
$$
\frac{M_k}{m_k} \le \frac{M_k}{\hat{m}_k} \le 
\frac{\hat{M}_k + nr}{\hat{m}_k} \le \frac{\hat{M}_k}{\hat{m}_k} + 
\frac{nr}{\eps a_{\min}/40wn} =  \frac{\hat{M}_k}{\hat{m}_k} + 
{40n}\cdot\left(\frac{\eps }{wn}\right)^9 \le 
\frac{\hat{M}_k}{\hat{m}_k} + \frac{40\eps^9}{n^8} \le 1 + \frac{2\eps}{n}.
$$
(Notice that $w\ge 1$.)
Using the upper bound on $\frac{M_k}{m_k}$, we obtain
\begin{eqnarray*}
                   \sum_{i\in A\setminus (B\cup C)} 
                       p_i\cdot\left(\sqrt{\|a_{i,.}^{(t)}\|}-\sqrt{\|a_{.,i}^{(t)}\|}\right)^2
&\le& \frac{1}{2\sum_{i,j}\haij}\cdot n\cdot m_k^2\left(\frac{M_k}{m_k}-1\right)^2 \\
&\le&  \frac{1}{2\sum_{i,j}\haij}\cdot n\cdot m_k^2\left(\frac{2\eps}{n}\right)^2 \\
&\le& \frac{2\eps^2}{n}\cdot m_k \le \frac{\eps^2}{n}\cdot f(\xt{t}),
\end{eqnarray*}
where the penultimate inequality uses the fact that 
$m_k\le \hat{m}_k + nr\le \hat{m}_k + \frac{\eps a_{\min}}{40wn} 
< 2\hat{m}_k \le \hat{M}_k + \hat{m}_k\le\sum_{i,j}\haij$.
\end{proof}

\begin{proof}[Proof of Lemma~\ref{lemma:decrease}]
We will assume that $\eps < \frac{1}{10}$.
We first consider the case that $i\in A\cap B$
(notice that $B\cap C = \emptyset$).
The update using $O(\ln(wn/\eps))$ bits of precision gives
$\widehat{x}^{(t)}_i + \alpha - (\eps/nw)^{10} \le 
\widehat{x}^{(t+1)}_i \le \widehat{x}^{(t)}_i + \alpha$,
so 
$$
\sqrt{\frac{\hai}{\hao}}\cdot e^{\widehat{x}^{(t)}_i  - (\eps/wn)^{10}}
\le e^{\widehat{x}^{(t+1)}_i} \le \sqrt{\frac{\hai}{\hao}}\cdot e^{\widehat{x}^{(t)}_i}.
$$
Therefore,
$$
\|a_{i,.}^{(t+1)}\| = 
\sum_{j=1}^n a_{ij}e^{\widehat{x}_i^{(t+1)} - \widehat{x}_j^{(t+1)}} \le 
\sqrt{\frac{\hai}{\hao}}\cdot \sum_{j=1}^n a_{ij}e^{\widehat{x}_i^{(t)} - \widehat{x}_j^{(t)}} = 
\sqrt{\frac{\hai}{\hao}}\cdot \|a_{i,.}^{(t)}\|\sqrt{\frac{\hai}{\hao}},
$$
and
$$
\|a_{.,i}^{(t+1)}\| = 
\sum_{j=1}^n a_{ji}e^{\widehat{x}_j^{(t+1)} - \widehat{x}_i^{(t+1)}} \le 
e^{(\eps/wn)^{10}}\cdot \sqrt{\frac{\hao}{\hai}}\cdot 
    \sum_{j=1}^n a_{ji}e^{\widehat{x}_j^{(t)} - \widehat{x}_i^{(t)}} \le
(1+2(\eps/wn)^{10}))\cdot \sqrt{\frac{\hao}{\hai}}\cdot \|a_{.,i}^{(t)}\|.
$$
We used the fact that $e^x \le 1+2x$ for $x\le \frac 1 2$.
We will now use the notation $\hat{M}_i$, $\hat{m}_i $, $M_i$,
and $m_i$ (the reader can recall the definitions from the proof
of Lemma~\ref{lemma:case}). We also put $\delta = 2(\eps/wn)^{10}$, 
and $\sigma = \frac{\hat{M}_i/\hat{m}_i}{{M_i}/{m_i}}$.
Thus, decrease of function $f(\cdot)$ due to balancing $i$ is
$f(\hxt{t}) - f(\hxt{t+1}) = M_i + m_i - \|a_{.,i}^{(t+1)}\| - \|a_{i,.}^{(t+1)}\|
\ge M_i + m_i - (1+\delta)\left(M_i\sqrt{\hat{m}_i/\hat{M}_i} + 
             m_i\sqrt{\hat{M}_i/\hat{m}_i}\right)
= M_i + m_i - (1+\delta)\left(\sqrt{1/\sigma} + \sqrt{\sigma}\right)\cdot \sqrt{M_i m_i}
= \left(\sqrt{M_i} -\sqrt{m_i}\right)^2 - 
    \left((1+\delta)/\sqrt{\sigma} + (1+\delta)\sqrt{\sigma} - 2 \right)\cdot \sqrt{M_i m_i}$.
We now consider three cases, and in each case show that 
$$
\left((1+\delta)/\sqrt{\sigma} + (1+\delta)\sqrt{\sigma} - 2 \right)\cdot \sqrt{M_i m_i} 
\le \frac{9}{10}\cdot \left(\sqrt{M_i} -\sqrt{m_i}\right)^2.
$$ 

\paragraph{case (i):} 
$1\le \sigma < 1 + \frac{\eps^4}{n^2}$. We first note that
$M_i\ge \hat{M}_i\ge \frac{\hat{M}_i + \hat{m}_i}{2} > \frac{\eps a_{\min}}{20wn}$.
Also, $m_i \le \hat{m}_i + nr$, so
$\frac{m_i}{M_i}\le \frac{\hat{m}_i + nr}{\hat{M}_i} \le 
\frac{1}{1+\eps/n} + \frac{nr}{\hat{M}_i} \le 1- \frac{\eps}{2n}$.
Since $\eps < \frac{1}{10}$, we have
\begin{eqnarray*}
                  \frac{9}{10}\left(1-\sqrt{\frac{m_i}{M_i}}\right)^2 
&\ge& \frac{9}{10}\left(1-\sqrt{1-\frac{\eps}{2n}}\right)^2 \\
&\ge& \frac{4\eps^4}{n^2} \\
&\ge& \left((1 + \delta) + (1+\delta)\cdot \left(1+\frac{\eps^4}{n^2}\right) -2\right) \\
&\ge&  \left(\frac{1+\delta}{\sqrt{\sigma}} + 
              (1+\delta)\sqrt{\sigma} - 2 \right)\cdot \sqrt{\frac{m_i}{M_i}},
\end{eqnarray*}
where the third inequality holds by definition of $\delta$, and the last inequality 
holds because $m_i/M_i \le 1$ and $\sigma \in [1, 1 + \eps^4/n^2]$. By 
multiplying both sides of the inequality by $M_i$ we obtain the desired
bound.

\paragraph{case (ii):}
$\sigma < 1$. We first prove a lower bound on the value of $\sigma$, as
follows: $\frac{\hat{M}_i}{\hat{m}_i} \ge \frac{\hat{M}_i}{m_i}\ge 
\frac{M_i - nr}{m_i}\ge\frac{M_i}{m_i}(1-\frac{nr}{M_i}) \ge 
\frac{M_i}{m_i}\cdot \left(1-\frac{nr}{M_i}\right) \ge 
\frac{M_i}{m_i}\cdot\left(1-\displaystyle\frac{20\eps^9}{n^8}\right)$,
and thus $\sigma\ge 1-\frac{20\eps^9}{n^8}$.
So we have
\begin{eqnarray*}
\left(\frac{1+\delta}{\sqrt{\sigma}} + (1+\delta)\sqrt{\sigma} - 2 \right)\cdot\sqrt{\frac{m_i}{M_i}}
&\le& \frac{1 + \delta}{\sqrt{1-\frac{20\eps^9}{n^8}}} + (1+ \delta) - 2 \\
&\le& (1+\delta)\left(1 + \frac{20\eps^9}{n^8}\right) + (1 + \delta) - 2 \\
&\le& \frac{24\eps^9}{n^8} < \frac{4\eps^4}{n^2} \le 
          \frac{9}{10}\cdot\left(1-\sqrt{\frac{m_i}{M_i}}\right)^2,
\end{eqnarray*} 
proving the desired inequality in this case. The first inequality holds 
because $\frac{m_i}{M_i} \le 1$ and $1-\frac{20\eps^9}{n^8}\le\sigma\le 1$.

\paragraph{case (iii):}
$\sigma > 1 + \frac{\eps^4}{n^2}$.
The idea is to show that $M_i/m_i$ is large so the desired inequality 
follows. We know that
$\frac{\sigma M_i}{m_i} = \frac{\hat{M}_i}{\hat{m}_i}\le 
\frac{M_i}{\hat{m}_i}$ and therefore $\hat{m}_i\le \frac{m_i}{\sigma}$.
On the other hand, $\hat{m}_i \ge m_i-nr$, so
$m_i\le \frac{nr}{1-1/\sigma}$. Clearly, $1/\sigma < 1-\frac{\eps^4}{2n^2}$, 
so $m_i < \frac{nr}{\eps^4/2n^2}$. Also, $M_i\ge \eps a_{\min}/20wn$.
Therefore,
$\frac{M_i}{m_i}\ge \frac{\eps a_{\min}/20wn}{2n^3r/\eps^4}\ge 
\frac{n^6}{40\eps^5}$.
Next, notice that since $\hat{m}_i > 0$ it must be that
$\hat{m}_i\ge r$. Therefore,
$m_i\le \hat{m}_i + nr\le 2n\hat{m}_i$.
This implies that $\frac{\hat{M}_i}{\hat{m}_i}\le \frac{M_i}{\hat{m}_i} \le 
2n\cdot\frac{M_i}{m_i}$, so $\sigma \le 2n$. Finally,
$$
\left(\frac{1+\delta}{\sqrt{\sigma}} + (1+\delta)\sqrt{\sigma} - 2 \right) \le 
(1+\delta)\cdot\sqrt{2n}\le 
\le \frac{1}{10}\cdot\sqrt{\frac{M_i}{m_i}},
$$
with room to spare (using the lower bound on $\frac{M_i}{m_i}$).
Multiplying both sides by $\sqrt{\frac{M_i}{m_i}}$ gives
$$
\left(\frac{1+\delta}{\sqrt{\sigma}} + (1+\delta)\sqrt{\sigma} - 2 \right)\cdot
     \sqrt{\frac{M_i}{m_i}} \le 
\frac{1}{10}\frac{M_i}{m_i}\le \frac{9}{10}\left(\sqrt{\frac{M_i}{m_i}}-1\right)^2,
$$
with more room to spare. This completes proof of the case $i\in A\cap B$.

We now move on to the case $i\in A\cap C$, so 
$\hat{M}_i + \hat{m}_i \ge \frac{\eps a_{\min}}{10wn}$ and 
$\hat{m}_i=0$. In the algorithm, $\alpha =\frac{1}{2}\ln(nr/\hao)$ or 
$\alpha = \frac{1}{2}\ln(\hai/nr)$. The idea is that we therefore replace
$\hat{m}_i$ (which is $0$) by $nr$ in some of the equations.
In particular,
$f(\hxt{t}) - f(\hxt{t+1}) \ge 
{M}_i + {m}_i - 
(1+\delta)\left(M_i\sqrt{\frac{nr}{\hat{M}_i}} + m_i\sqrt{\frac{\hat{M}_i}{nr}}\right)$.
Note that since $\hat{m}_i = 0$ then $m_i \le nr$. Therefore, 
$\frac{\hat{M}_i}{nr} \le \frac{M_i}{nr}\le \frac{M_i}{m_i}$.
On the other hand, since $i\in A$, 
$\hat{M}_i \ge \eps a_{\min}/20wn$, so
$\frac{\hat{M}_i}{nr} \ge \frac{\eps a_{\min}/20wn}{n(\eps/wn)^{10} a_{\min}} \ge 
\frac{n^8}{20\eps^9}$.
Thus we get
\begin{eqnarray*}
             f(\hxt{t}) - f(\hxt{t+1}) 
&\ge& {M}_i + {m}_i - 
           (1+\delta)\left(M_i\sqrt{\frac{nr}{\hat{M}_i}} + m_i\sqrt{\frac{\hat{M}_i}{nr}}\right) \\
&\ge& M_i + m_i - 
          (1+\delta)\left(M_i\sqrt{\frac{20\eps^9}{n^8}} + m_i\sqrt{\frac{M_i}{m_i}}\right) \\
&\ge& M_i + m_i - 2(1+\delta)M_i\sqrt{\frac{20\eps^9}{n^8}} \\
&\ge& M_i\left(1 - \frac{20\eps^4}{n^4}\right)\ge \frac{1}{10}M_i \ge 
           \frac{1}{10}(\sqrt{M_i}-\sqrt{m_i})^2,
\end{eqnarray*}
where the third inequality holds because 
$m_i\sqrt{\frac{M_i}{m_i}} = M_i\sqrt{\frac{m_i}{M_i}}\le M_i\sqrt{\frac{nr}{\hat{M}_i}}$.
\end{proof}


\bibliographystyle{plain}
\bibliography{main}

\end{document}